\newcommand{\eps}{\varepsilon}
\newcommand{\Ot}{\tilde{O}}
\newtheorem{claim}{Claim}
\newtheorem{conjecture}{Conjecture}
\newtheorem{observation}{Observation}
\newtheorem{lemma}{Lemma}
\newtheorem{theorem}{Theorem}
\title{Tight Hardness Results for Distance and Centrality Problems in Constant
Degree Graphs}
\author[1]{Søren Dahlgaard\thanks{Research partly supported by Mikkel Thorup's
advanced grant DFF-0602-02499B from the Danish Council for Independent
Research.}}
\author[1]{Jacob Evald}
\affil[1]{University of Copenhagen \\\texttt{[soerend,jeav]@di.ku.dk}}
\begin{document}
\maketitle

\begin{abstract}
    Finding important nodes in a graph and measuring their importance is a
    fundamental problem in the analysis of social networks, transportation
    networks, biological systems, etc. Among the most popular such metrics of
    importance are graph centrality, betweenness centrality (BC), and reach
    centrality (RC). These measures are also very related to classic notions like
    diameter and radius. Roditty and Vassilevska Williams~[STOC'13] showed
    that no algorithm can compute a $(3/2-\delta)$-approximation of the
    diameter in sparse and unweighted graphs faster that $n^{2-o(1)}$ time
    unless the widely believed strong exponential time hypothesis (SETH) is
    false. Abboud et al.~[SODA'15] and [SODA'16] further analyzed these
    problems under the recent and very active line of research on hardness in
    \textbf{P}. They showed that in sparse and unweighted graphs (weighted for
    BC) none of these problems can be solved faster than $n^{2-o(1)}$
    unless some popular conjecture is false. Furthermore they ruled out a
    $(2-\delta)$-approximation for RC, a $(3/2-\delta)$-approximation for
    Radius and a $(5/3-\delta)$-approximation for computing all eccentricities
    of a graph for any $\delta > 0$.

    In this paper we extend these results to the case of unweighted graphs with
    constant maximum degree. Through new graph constructions we are able to
    obtain the same approximation and time bounds as for sparse graphs even in
    unweighted graphs with maximum degree 3. Specifically we show that no
    $(3/2-\delta)$ approximation of Radius or Diameter,
    $(2-\delta)$-approximation of RC, $(5/3-\delta)$-approximation of all
    eccentricities or exact algorithm for BC exists in time $n^{2-o(1)}$ for
    such graphs and any $\delta > 0$. For BC, this strengthens the result of
    Abboud et al.~[SODA'16] by showing a hardness result for unweighted graphs.
    Our results follow in the footsteps of Abboud et al.~[SODA'16] and Abboud
    and Dahlgaard~[FOCS'16] by showing conditional lower bounds for restricted
    but realistic graph classes.
\end{abstract}

\section{Introduction}
Measuring the importance of specific nodes in a graph is a fundamental
problem in the analysis of social networks, transportation networks, biological
systems, etc. Several notions of importance have been proposed in the
literature. Among the most popular such notions are several centrality measures
such as \emph{graph centrality}~\cite{Hage95}, \emph{betweenness
centrality} (BC)~\cite{Freeman77}, \emph{closeness centrality}~\cite{Sabidussi66},
and \emph{reach centrality}~\cite{Gutman04}. All these centrality measures are
closely related to the shortest paths of the graph. As an example, the graph
centrality of a node is the inverse of its maximum distance to any other node
in the graph. These measures are extensively studied in both the theoretical
and practical communities with some papers having thousands of
citations~\cite{Brandes01,Freeman77,Hakimi64,Newman05}. Fully understanding the
complexity of computing these measures is thus a very important problem in the
field of graph analysis. In this paper we follow a recent and very active line
of research on showing hardness or \emph{conditional lower bounds} (CLBs) for
problems in
\textbf{P}~\cite{Patrascu10,AbboudV14,AbboudVY15,AbboudGV15,Dahlgaard16,AbboudWW16,AbboudD16,RodittyW13}
focusing on the above mentioned centrality measures in graphs with constant
degrees.

Given an undirected and unweighted graph $G = (V,E)$ with $n$ nodes and $m$
edges we let $d_G(u,v)$ denote the distance between nodes $u,v\in V$. We omit
the subscript $G$, when it is clear from the context. Let $\sigma_{s,t}$
denote the number of distinct shortest paths between $s$ and $t$ and let
$\sigma_{s,t}(u)$ denote the number of such paths passing through $u$. In this
paper we consider several centrality and importance measures of graphs. We
summarize the definitions of these below in Table~\ref{tab:defns}. We note that
all of these definitions also make sense for weighted graphs, but we
concentrate on unweighted graphs in this paper, as they are more difficult to
show hardness results for in general (see e.g.~\cite{AbboudD16}).
\begin{table}[htbp]
    \centering
    \begin{tabular}{l | c}
        \toprule
        \textbf{Name} & \textbf{Definition} \\
        \midrule
        The \emph{Eccentricity} (of $u$) & $\epsilon(u) := \max_{v\in V} d(u,v)$ \\
        The \emph{Diameter} & $D := \max_{u\in V} \epsilon(u)$ \\
        The \emph{Radius} & $R := \min_{u\in V} \epsilon(u)$ \\
        The \emph{Reach Centrality (RC)} (of $u$) & $RC(u) :=
        \max_{\substack{s,t\in V\\d(s,t) = d(s,u)+d(u,t)}}\left(
        \min\!\left(d(s,u),d(u,t)\right)\right)$ \\
        The \emph{Betweenness Centrality (BC)} (of $u$) & $BC(u) :=
        \sum_{s,t\in V\setminus \{u\}}\frac{\sigma_{s,t}(u)}{\sigma_{s,t}}$ \\
        The \emph{Graph Centrality (GC)} (of $u$) & $GC(u) :=
        \frac{1}{\max_{v\in V} d(u,v)}$ \\
        \bottomrule
    \end{tabular}
    \caption{Definitions of different centrality and importance measures.}
    \label{tab:defns}
\end{table}
We note that the maximum Graph Centrality is exactly the inverse of the Radius.
All of these measures except BC can be computed by simply running an algorithm
for the classical all pairs shortest paths (APSP) problem in $\Ot(n^\omega)$ or
$\Ot(mn)$ time, where $\omega$ is the matrix-multiplication exponent. For BC we
can use Brandes's algorithm~\cite{Brandes01} to compute the betweenness
centrality of all nodes in $O(mn + n^2\log n)$ time\footnote{We note that
Brandes's and other popular algorithms for computing BC neglects the complexity
of keeping the counters of the number of shortest paths. If taking this into
account the worst-case running time grows by a factor of $\Theta(n\log n)$.}.

\subsection{Hardness in P}\label{sec:hardness}
A recent and very active line of work concerns itself with showing hardness
results for problems in \textbf{P} based on the assumption of several popular
conjectures. For the measures in Table~\ref{tab:defns} several results are
known. Perhaps the most well-studied of the problems from a theoretical
perspective is diameter. Roditty and Vasillevska Williams~\cite{RodittyW13}
showed that no algorithm can solve the diameter problem in sparse graphs in
time $O(n^{2-\eps})$ for any $\eps>0$ unless the widely believed Strong
Exponential Time Hypothesis (SETH)~\cite{ImpagliazzoP01} is false. In fact they
showed that no algorithm can even compute at $3/2-\delta$ approximation in
this time for any $\delta > 0$. We say that a number $x$ is an
$\alpha$-approximation of a number $y$ if $y \le x\le \alpha y$. In some cases
we will also allow the algorithm to provide an under-approximation.
Abboud et al.~\cite{AbboudWW16} showed similar results for the problem
of computing the radius, median, and all eccentricities in sparse graphs.
They showed that unless SETH is false no algorithm can compute a
$5/3-\delta$ approximation of all eccentricities in time $O(n^{2-\eps})$
for any $\eps,\delta > 0$ in sparse graphs. Based on the similar \emph{Hitting
Set (HS) Conjecture} they also showed hardness results for the radius and
median problems. For radius, they showed that no algorithm can compute a
$3/2-\delta$ approximation in time $O(n^{2-\eps})$ for any $\eps,\delta>0$.

For the centrality measures of Table~\ref{tab:defns}, Abboud et
al.~\cite{AbboudGV15} showed that radius, median, and betweenness centrality
are all equivalent to the classic APSP problem under \emph{subcubic
reductions}. Similarly, they showed that RC is equivalent to the diameter
problem under subcubic reductions. For betweenness centrality they showed
that computing an $\alpha$-approximation for any $\alpha > 0$ is equivalent to
APSP under subcubic reductions and that no algorithm can compute such an
approximation for any node in time $O(n^{2-\eps})$ in sparse graphs unless SETH
is false. Finally they show that computing reach centrality is equivalent to
diameter under subcubic reductions and that computing a
$(2-\delta)$-approximation of RC in sparse and \emph{unweighted} graphs cannot
be done in $O(n^{2-\eps})$ time unless SETH is false. An important note about
all these reductions except for RC in sparse graphs is that they only hold for
\emph{weighted} graphs.

The known hardness results for the measures in Table~\ref{tab:defns} are
summarized below in Table~\ref{tab:results}.

\begin{table}[htbp]
    \centering
    \begin{tabular}{l | l | l | l | c}
        \toprule
        \textbf{Problem} & \textbf{Bound} & \textbf{Approximation} &
        \textbf{Graph family} & \textbf{Source} \\
        \midrule
        Diameter & $n^{2-o(1)}$ & $3/2-\delta$-approx & Sparse, unweighted &
        \cite{RodittyW13} \\
        Radius & $n^{2-o(1)}$ & $3/2-\delta$-approx & Sparse,
        unweighted & \cite{AbboudWW16} \\
        Radius & $n^{3-o(1)}$ & Exact & Dense, weighted & \cite{AbboudGV15} \\
        Eccentricities & $n^{2-o(1)}$ & $5/3-\delta$ & Sparse, unweighted & \cite{AbboudWW16} \\
        Reach Centrality & $n^{2-o(1)}$ & $2-\delta$ & Sparse, unweighted &
        \cite{AbboudGV15} \\
        Reach Centrality & $n^{2-o(1)}$ & Any finite & Sparse, weighted &
        \cite{AbboudGV15} \\
        Betweenness Centrality & $n^{2-o(1)}$ & Any finite & Sparse, weighted &
        \cite{AbboudGV15} \\
        Betweenness Centrality & $n^{3-o(1)}$ & Exact & Dense, weighted &
        \cite{AbboudGV15} \\
        \bottomrule
    \end{tabular}
    \caption{Known hardness results for the measures of Table~\ref{tab:defns}.
    The hardness results are based on different conjectures. We refer to the
    discussion above for the specific details.}
    \label{tab:results}
\end{table}

In this paper we will follow this active line of research on hardness in
\textbf{P}, basing hardness on the two following popular conjectures.
\begin{conjecture}[Orthogonal Vectors Conjecture]\label{conj:ov}
    Let $A,B\subseteq \{0,1\}^d$ be two sets of $n$ boolean vectors of
    dimension $d = \omega(\log n)$. Then there exists no algorithm that can
    determine whether there is an orthogonal pair $a\in A$, $b\in B$ in time
    $O(n^{2-\eps})$ for any $\eps > 0$.
\end{conjecture}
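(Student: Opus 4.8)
It should be said at the outset that Conjecture~\ref{conj:ov} is a hardness hypothesis, not a theorem with a known proof: an \emph{unconditional} $n^{2-o(1)}$ lower bound for a problem lying in \textbf{P} (Orthogonal Vectors is solvable in $O(n^2 d)$ time) would be a spectacular result, since no super-linear time lower bound is known for any problem in \textbf{P}, or even \textbf{NP}, on a general model of computation such as the word-RAM. The plan, therefore, is not to prove the statement from scratch but to (i)~record the strongest evidence for it, (ii)~isolate the regime in which it is meant to hold, and (iii)~pinpoint the exact place where an unconditional argument would have to enter.

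First I would give the standard reduction from CNF-SAT, which is the source of essentially all of our confidence in the conjecture. Take a $k$-CNF formula $\varphi$ on $N$ variables with $m$ clauses, split the variables into two halves of size $N/2$, and for each of the $2^{N/2}$ partial assignments $\alpha$ to the first half build a vector $a^{\alpha}\in\{0,1\}^m$ with $a^{\alpha}_j = 1$ iff $\alpha$ fails to satisfy clause $j$; build $b^{\beta}$ symmetrically from the second half and set $A=\{a^{\alpha}\}$ and $B=\{b^{\beta}\}$. A merged assignment $(\alpha,\beta)$ satisfies clause $j$ iff $\alpha$ or $\beta$ does, i.e.\ iff $a^{\alpha}_j b^{\beta}_j = 0$, so $(\alpha,\beta)$ satisfies $\varphi$ exactly when $a^{\alpha}$ and $b^{\beta}$ are orthogonal; hence $\varphi$ is satisfiable iff $A,B$ contain an orthogonal pair. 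With $n = 2^{N/2}$ and $d = m = \mathrm{poly}(N) = \omega(\log n)$, an $O(n^{2-\eps})$ algorithm for Orthogonal Vectors would decide CNF-SAT in $2^{(1-\eps/2)N}\cdot\mathrm{poly}(N)$ time, contradicting SETH. This establishes that Conjecture~\ref{conj:ov} is implied by SETH, which is the most one can currently prove; it does not by itself yield the unconditional statement as worded.

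Next I would justify the dimension requirement $d = \omega(\log n)$, which is not cosmetic: for $d = c\log n$ the polynomial-method algorithms of Abboud--Williams--Yu and Chan--Williams solve Orthogonal Vectors in $n^{2-\Omega(1/\log c)}$ time, so the conjecture is genuinely a claim about the super-logarithmic-dimension regime, and any prospective proof must remain compatible with those algorithms. I would also note, for completeness, that one might hope to anchor the conjecture in a different fine-grained assumption (a reduction from 3SUM or APSP), but no such reduction is known --- which, rather than being a weakness, is part of why Orthogonal Vectors is regarded as one of the canonical quadratic-time problems.

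The main obstacle is structural rather than technical: the statement is \emph{unconditional} as written, and removing its dependence on SETH (or on any hypothesis) amounts to proving an honest time lower bound for a problem in \textbf{P}, which is beyond every technique available today. Accordingly, the correct resolution is to adopt Conjecture~\ref{conj:ov} as an assumption, bear in mind that it is a theorem under SETH, and carry the conditionality through all of the hardness results established later in the paper.
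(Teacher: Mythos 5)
You are right that Conjecture~\ref{conj:ov} is a hardness hypothesis rather than a provable statement: the paper does not prove it either, but simply adopts it as an assumption and notes (citing Williams) that it is implied by SETH. Your sketch of the split-and-enumerate reduction from CNF-SAT is the standard and correct justification for that implication, and your remarks on the $d=\omega(\log n)$ regime are accurate, so your treatment matches the paper's in substance.
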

The OV conjectures is implied by SETH~\cite{Williams05} and has been used in
several papers as an intermediate step for showing SETH-hardness.

\begin{conjecture}[Hitting Set Conjecture]\label{conj:hs}
    Let $A,B\subseteq \{0,1\}^d$ be two sets of $n$ boolean vectors of
    dimension $d = \omega(\log n)$. Then there exists no algorithm that can
    determine whether there is an $a\in A$ that is not orthogonal to any $b\in
    B$ in time $O(n^{2-\eps})$ for any $\eps > 0$.
\end{conjecture}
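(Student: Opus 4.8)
} Strictly speaking there is nothing to prove here: Conjecture~\ref{conj:hs} is a hardness hypothesis, and a genuine proof would be an unconditional quadratic lower bound for an explicit polynomial-time problem, which is out of reach of current techniques. What one can reasonably do instead, and what the remainder of this discussion sketches, is (i) record the best known upper bound and explain the barrier to beating it, and (ii) situate the Hitting Set problem among the other fine-grained assumptions used in this paper, in particular Conjecture~\ref{conj:ov}.

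For (i): the naive algorithm builds the $n \times n$ Boolean matrix $M$ with $M[a,b] = 1$ iff $a \cdot b \neq 0$ and then checks whether some row of $M$ is all-ones (such a row is exactly an $a \in A$ hitting every $b \in B$), which runs in $\Theta(n^2 d)$ time; for $d = \omega(\log n)$ no $O(n^{2-\eps})$ algorithm is known. The apparent obstruction is the familiar ``inspect all $n^2$ pairs'' bottleneck behind OV- and SETH-based lower bounds: deciding the inner universally quantified predicate for a single $a \in A$ is trivial in $O(nd)$ time, but there is no evident way to amortize work across the choices of $a$ without essentially touching all of $A \times B$.

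For (ii): the plan would be to relate Conjecture~\ref{conj:hs} to Conjecture~\ref{conj:ov}. Hitting Set is the natural $\exists\forall$ (``for some $a$, for all $b$'') counterpart of the $\exists\exists$ statement defining OV, and one implication is immediate --- if the Hitting Set instance is a \textbf{no}-instance then, picking any $a \in A$, there is a $b \in B$ with $a \cdot b = 0$, so $(A,B)$ has an orthogonal pair. The converse fails, so this is not a reduction, and indeed no reduction in either direction is known, nor a derivation from SETH in the style of~\cite{Williams05}. The main obstacle to such a derivation --- and the reason Conjecture~\ref{conj:hs} must be \emph{posited} rather than proved --- is precisely the quantifier mismatch: the $\exists\forall$ structure of Hitting Set sits at a higher level of the polynomial hierarchy than the $\exists$ structure of $k$-SAT, so a SETH-based argument would require a genuinely new gadget and would itself be a notable result. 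Accordingly the conjecture is adopted here as an independent hypothesis (cf.~\cite{AbboudWW16}); a refutation, i.e.\ a truly subquadratic Hitting Set algorithm, would through the reductions used in this paper and in prior work collapse the known lower bounds for Radius and Median, and would be regarded as a surprising breakthrough.
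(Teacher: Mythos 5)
This is a conjecture, not a theorem, and the paper accordingly offers no proof --- it simply states the hypothesis, attributes it to~\cite{AbboudWW16}, and notes (via~\cite{CarmosinoGIMPS16}) that its $\exists\forall$ quantifier structure is the obstacle to deriving it from SETH. Your remark correctly identifies all of this and matches the paper's own treatment, so there is nothing to fault.
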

We say that a vector $a\in A$ that is not orthogonal to any $b\in B$ is a
\emph{hitting set}, as it ``hits'' each vector of $B$. The HS conjecture was
introduced by Abboud et al.~\cite{AbboudWW16} as an analog to the OV
conjecture. One reason for introducing this conjecture is that Carmosino et
al.~\cite{CarmosinoGIMPS16} give evidence that problems such as radius and
median cannot be shown to have SETH-hardness due to the nature of the
quantifiers in this problem.

\subsection{Hardness for graphs with constant max degree}
An important direction for future research on fine-grained complexity of
polynomial time problems, as noted by Abboud et al.~\cite{AbboudWW16} and
Abboud and Dahlgaard in~\cite{AbboudD16} is to understand the complexity of
fundamental problems in restricted, but realistic classes of graphs. Such
results for graphs of bounded treewidth was shown by Abboud et
al.~\cite{AbboudWW16} and for planar graphs by Abboud and
Dahlgaard~\cite{AbboudD16}. The result of~\cite{AbboudD16} even holds for
planar graphs with constant maximum degree. Following up on this work, we
concentrate on showing hardness for problems in graphs with constant maximum
degree (bounded degree graphs). This family of graphs has received attention in
several communities and in some cases exhibits better algorithms and data
structures than graphs with unbounded degrees. This includes approximation of
NP hard problems~\cite{Halperin02,AustrinKS11,Halldorsson00}, labeling
schemes~\cite{AlonN16,AlstrupDKP16}, property testing~\cite{GoldreichR02} and
graph spanners~\cite{FominGL11}. Furthermore, ruling out fast algorithms for
graphs with constant maximum degree also rules out approaches based on bounded
arboricity or degeneracy. Such approaches have previously been used to
significantly speed up algorithms for e.g. subgraph finding (see
e.g.~\cite{ChibaN85,AlonG09,AlonYZ97}. See also \cite{DahlgaardKS17} which
improved on \cite{AlonYZ97} by showing a structural lemma for graphs with low
degree).

Extending the results of conditional lower
bounds from~\cite{AbboudGV15,AbboudWW16,RodittyW13} to bounded degree graphs
thus seems like a natural step in developing the theory of hardness in
\textbf{P}.

\subsection{Our results}
In this paper we present tight hardness results for all of the measures of
Table~\ref{tab:defns} in graphs with constant maximum degree.
More precisely we show the following theorems.

\begin{theorem}\label{thm:diameter}
    No algorithm can compute a $(3/2-\delta)$-approximation of the diameter of
    an unweighted constant degree graph in time $O(n^{2-\eps})$ for any
    $\eps,\delta > 0$ unless Conjecture~\ref{conj:ov} is false.
\end{theorem}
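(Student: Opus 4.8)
The plan is to reduce from the Orthogonal Vectors problem of Conjecture~\ref{conj:ov}; since OV remains hard for $d=\mathrm{polylog}(n)$, we may assume $d=n^{o(1)}$. The starting point is the sparse OV-to-diameter reduction of Roditty and Vassilevska Williams: from $A,B\subseteq\{0,1\}^d$ one builds in $O(nd)$ time an unweighted graph $H$ on $O(nd)$ vertices with designated ``vector vertices'' $\{v_a\}_{a\in A}\cup\{v_b\}_{b\in B}$ such that $d_H(v_a,v_b)=3$ if $a\perp b$, $d_H(v_a,v_b)=2$ if $a$ and $b$ share a coordinate, and every other pair of vertices is at distance at most $2$; hence $\mathrm{diam}(H)$ is $2$ if there is no orthogonal pair and $3$ otherwise, so a $(3/2-\delta)$-approximation of $\mathrm{diam}(H)$ decides OV. The obstacle is that $H$ has vertices of degree $\Theta(n)$: two ``hub'' vertices and $d$ ``coordinate'' vertices. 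The real content of the proof is therefore a degree-reduction transformation turning $H$ into a maximum-degree-$3$ graph $G$ that still exhibits a $3/2$ gap in its diameter.

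The transformation I would attempt is the natural one: replace every vertex $x$ of $H$ by a balanced binary tree $T_x$ of maximum degree $3$ whose leaves inherit the edges formerly incident to $x$ (padding with dummy leaves), letting the root $\hat x$ play the role of $x$. This achieves maximum degree $3$ but inflates all distances to $\Theta(\log n)$, so the $\{2,3\}$ gap, now perturbed additively by $\Theta(\log n)$, disappears. To recover a multiplicative gap I would also \emph{scale} the construction, replacing each edge of $H$ by a path of length $L$, where $L=L(\delta)$ is a large enough constant times $\delta^{-1}\log n$ that the $O(\log n)$ overhead the tree gadgets contribute to any shortest path is a small fraction of $L$. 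The intended outcome is that non-orthogonal pairs sit at distance about $2L$, orthogonal pairs at distance about $3L$, and $L_2/L_1$ exceeds $3/2-\delta$.

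The hard part — and, I expect, the heart of the paper — is that this naive combination does \emph{not} preserve the gap. After the blow-up, vertices deep inside the tree gadgets of the former coordinate and hub vertices, and midpoints of the long scaled edges, become eccentric: two such deep vertices sitting in the gadgets of coordinates $c_i$ and $c_j$ are separated by roughly $L/2 + d_H(c_i,c_j)\cdot L + L/2 \approx 3L$, since $d_H(c_i,c_j)=2$ in $H$ (the hubs keep all coordinates pairwise close). So $\mathrm{diam}(G)\approx 3L$ even in the no-orthogonal-pair case, matching the orthogonal distance, and no gap survives. To fix this one must redesign $H$ so that in the completeness case \emph{every} vertex has eccentricity at most some $L_1=\Theta(L)$, while orthogonal pairs are still forced to distance at least $L_2$ with $L_2/L_1\to 3/2$: put the $d$ coordinate vertices on a short bounded-degree backbone (e.g.\ a balanced binary tree on the coordinates) with its edges suitably scaled so that the backbone is neither a short-circuit for orthogonal pairs nor long enough to inflate eccentricities in the completeness case; delete the redundant hub edges that create the damaging $d_H(c_i,c_j)=2$ shortcuts; and choose carefully which classes of edges are scaled and at which leaves of each tree the external edges attach, all while keeping $G$ connected and keeping a ``length-$3$ detour'' forced on orthogonal pairs. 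Proving that such a balance can be struck is the main obstacle.

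Given such a $G$, the proof concludes with two lemmas. Completeness: if no $a\in A$ is orthogonal to any $b\in B$, then $\mathrm{diam}(G)\le L_1$, shown by exhibiting, for every pair of vertices, a route that descends to the backbone, crosses it, and climbs back, paying a constant number of scaled edges plus $O(\log n)$ overhead, for a total of $2L+O(\log n)$. Soundness: if $a^*\perp b^*$, then $d_G(\hat a^*,\hat b^*)\ge L_2$, shown by projecting any $\hat a^*$--$\hat b^*$ path to a walk in the base graph, which has length at least $3$ and therefore crosses at least three scaled edges, giving length at least $3L$ up to gadget overhead; the choice of $L$ makes $3L$ exceed $(3/2-\delta)(2L+O(\log n))$. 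Finally $G$ has $N=n^{1+o(1)}$ vertices and edges (a $\Theta(\log n)$ factor from the trees and an $n^{o(1)}$ factor from $L$ and $d$), is computable in $n^{1+o(1)}$ time, and has maximum degree $3$, so an $O(N^{2-\eps})$-time $(3/2-\delta)$-approximation of its diameter would solve OV in $O(n^{2-\eps'})$ time for some $\eps'>0$, contradicting Conjecture~\ref{conj:ov}.
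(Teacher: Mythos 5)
Your framing is right — reduce from OV, blow up the sparse construction into a degree-$3$ graph with binary-tree gadgets and scaled edges, and argue a $2$-vs-$3$ style gap survives — and you correctly diagnose that the naive tree-plus-scaling transformation destroys the $3/2$ ratio. But the step you flag as ``the main obstacle'' is in fact the entire content of the proof, and the repair you sketch does not work. A short backbone on the coordinate vertices faces an unavoidable tension that you name but do not resolve: for soundness every pair $c_i,c_j$ must be at backbone-distance at least $L$ (otherwise an orthogonal pair $a^*,b^*$ reaches distance $2L+o(L)$ via $a^*\to c_i\to c_j\to b^*$), but then two vectors $a,a'\in A$ sharing no coordinate are at distance at least $L+L+L=3L$ (via $a\to c_i\to c_j\to a'$), which already matches the soundness threshold and kills completeness. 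Moreover, even if $a$--$a'$ pairs were handled, your target of $2L$ (completeness) versus $3L$ (soundness) is unattainable: the midpoints of two scaled $a$--$c$ and $b$--$c'$ edges sit at distance roughly $\frac{5}{2}L$ in the completeness case, which is exactly the $(5/2)$-vs-$3$ degradation the paper points out rules out only a $(6/5-\delta)$-approximation.

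The paper's construction resolves this with two ingredients absent from your proposal. First, instead of a backbone on the coordinates, it adds an \emph{$A$-shortcut tree} and a \emph{$B$-shortcut tree}: a balanced binary tree whose leaves are joined by length-$p$ paths to the roots $a_r$ (resp.\ $b_r$), so that all of $A$ is pairwise within $2p+O(\log n)$ without creating any short $A$-to-$B$ route (each coordinate gadget is also split into a $c_A$-tree and a $c_B$-tree so it cannot serve as a hub). Second, and crucially, it attaches a pendant path of length $p$ to every $a_r$ and $b_r$, ending at nodes $a_p,b_p$; the diameter is then realized only by pairs $(a_p,b_p)$, at $4p+O(\log n)$ when $a,b$ are non-orthogonal and at least $6p$ when they are orthogonal, while every ``spoiler'' vertex (edge midpoints, deep tree nodes, shortcut-tree nodes) is shown to have eccentricity at most $4p+O(\log n)$. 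The gap is thus $6p$ versus $4p$ — ratio $3/2$ — rather than $3L$ versus $2L$, and it is precisely the pendant paths that lift the interesting pairs above all the spoilers. Without these two ideas (or a genuine substitute), your argument does not close.
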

\begin{theorem}\label{thm:radius}
    No algorithm can compute a $(3/2-\delta)$-approximation of the radius of
    an unweighted constant degree graph in time $O(n^{2-\eps})$ for any
    $\eps,\delta > 0$ unless Conjecture~\ref{conj:hs} is false.
\end{theorem}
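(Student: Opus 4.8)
The plan is to reduce from the Hitting Set problem (Conjecture~\ref{conj:hs}) to the problem of $(3/2-\delta)$-approximating the radius of an unweighted constant-degree graph, mimicking the sparse-graph reduction of Abboud et al.~\cite{AbboudWW16} but replacing their high-degree "coordinate gadgets" with low-degree substructures. Recall the sparse reduction: given sets $A,B\subseteq\{0,1\}^d$, one builds a graph with a vertex for each $a\in A$, each $b\in B$, and each coordinate $c\in[d]$, connecting $a$ to $c$ when $a[c]=1$ and similarly for $b$, plus an auxiliary apex-like vertex structure so that an $a\in A$ that hits all of $B$ has eccentricity $2$ (it reaches every $b$ via a shared coordinate), while every non-hitting $a$, every $b$, and every coordinate vertex has eccentricity $\ge 3$. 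Thus the radius is $2$ iff a hitting set exists and $3$ otherwise; a $(3/2-\delta)$-approximation distinguishes these.

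The first step is to recast this construction so that every vertex has degree at most $3$. The obstacle is that the coordinate vertices naturally have degree $\Theta(n)$ (one edge per vector containing that coordinate) and the vector vertices have degree $\Theta(d)$. The plan is to replace each such high-degree vertex by a balanced binary tree of depth $O(\log n)$ whose leaves are the former neighbors; a vertex of degree $k$ becomes an internal tree with $O(k)$ nodes each of degree $\le 3$, and any former path of length $\ell$ through the hub now has length $\ell + O(\log n)$. To keep the construction clean one uniformly scales all "important" distances: after subdividing/tree-expanding, a former distance-$2$ connection becomes $\Theta(\log n)$ and a former distance-$3$ connection becomes $\Theta(\log n)$ plus an additive constant — which is not enough of a gap. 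So the key idea (already used in~\cite{AbboudD16} for planar bounded-degree graphs) is to first blow up the original metric by a factor $t = \Theta(\log n)$ by subdividing every original edge into a path of length $t$, and only then expand high-degree vertices into depth-$O(\log n)$ trees. Then a "radius $2$" instance becomes radius $\approx 2t + O(\log n)$ and a "radius $3$" instance becomes radius $\approx 3t + O(\log n)$; choosing $t$ a large enough constant multiple of $\log n$ makes the ratio between the two cases exceed $3/2 - \delta$ for any fixed $\delta>0$, since the additive $O(\log n)$ terms are dominated.

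The second step is to verify the two directions of correctness in the expanded graph. In the "yes" case, a hitting vector $a$ must still be a center: I need to check that after tree-expansion and subdivision, the distance from $a$'s vertex (or from a designated representative vertex for $a$) to every other vertex — including internal tree nodes and subdivision vertices — is at most $2t + O(\log n)$; this requires choosing the tree depths consistently and possibly adding a small number of auxiliary vertices (a "root" connecting the $A$-side) so that internal vertices of a tree belonging to some far-away $b$ are still reachable. In the "no" case, I need that every vertex has eccentricity at least $3t - O(\log n)$: for a vector vertex $a$ that misses some $b^\*$, the shortest $a$–$b^\*$ path must traverse at least two original hub-to-vector hops plus a "detour" of three original edges, giving $\ge 3t$; for coordinate vertices and subdivision/tree vertices the bound follows because they are at original-distance $\ge 3$ from the farthest $b$. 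The main obstacle I anticipate is bookkeeping: ensuring that the newly introduced low-degree gadgetry does not accidentally create a short path that lowers the eccentricity in the "no" case (for instance, a tree-internal node that is "centrally located" and close to everything), and ensuring the additive $O(\log n)$ slack is genuinely swamped by the multiplicative $t$ blow-up for every fixed $\delta$. Once those invariants are nailed down, the graph has $O((n+d)\cdot\mathrm{poly}\log n) = n^{1+o(1)}$ vertices and maximum degree $3$, so an $O(N^{2-\eps})$ radius-approximation algorithm would solve Hitting Set in $n^{2-\eps'}$ time, contradicting Conjecture~\ref{conj:hs}.
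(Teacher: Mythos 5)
Your plan is the ``subdivide edges by a factor $t$ and expand hubs into binary trees'' approach, and this is precisely the approach the paper explicitly rejects because it does not preserve the $3/2$ ratio. The problem is not the additive $O(\log n)$ slack (which indeed can be swamped by taking $t=\omega(\log n)$) but the new vertices in the middle of subdivided edges, and your two correctness claims fail because of them. In the ``yes'' case, the hitting vector $a$ may have both endpoints of some original edge $(u,v)$ at distance exactly $2$ (e.g.\ an edge between some $b$ and a coordinate $c$ with $a[c]=0$), so the midpoint of the subdivided $(u,v)$-path is at distance about $2t+t/2=2.5t$ from $a$; hence the radius in the ``yes'' case is about $2.5t$, not $2t+O(\log n)$. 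Symmetrically, in the ``no'' case a subdivision vertex can itself be nearly central (its eccentricity is governed by $\max_x\min(d(u,x),d(v,x))$, which can be $2$ even when every original vertex has eccentricity $3$), so the ``no''-case radius need not stay near $3t$. At best this argument separates roughly $2.5t$ from $3t$, i.e.\ it rules out only a $(6/5-\delta)$-approximation; the paper states exactly this degradation for the analogous diameter reduction in its ``Challenges and techniques'' section. You flag the ``centrally located internal node'' issue as bookkeeping, but it is the crux, and no mechanism in your proposal resolves it.

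The paper's proof is structurally different in two ways that address exactly these points. First, instead of subdividing the sparse radius gadget, it builds a bespoke $OV_{dia}$-graph in which the quantities to be separated are distances between endpoints $a_p,b_p$ of extra dangling paths of length $p$ hanging off the vector trees, together with shortcut trees that keep all auxiliary vertices (tree nodes, path nodes) at eccentricity at most $(4+o(1))p$; this yields a $4p$ versus $6p$ gap (ratio $3/2$) rather than $2t$ versus $3t$, and the dangling paths are what prevent midpoints from spoiling the ratio. Second, for radius one must force the center to lie on a designated candidate vertex: in a single copy the radius would be about $4p$ regardless of whether a hitting vector exists, so the paper glues two mirrored copies of the $OV_{dia}$-graph at the $a_p$ nodes, making every $G_1$-to-$G_2$ path pass through some $a_p$ and hence forcing the center to be an $a_p$, whose eccentricity is $(4+o(1))p$ iff $a$ is a hitting vector and at least $6p$ otherwise. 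Your sketch gestures at an ``apex-like'' structure but supplies neither of these mechanisms, so as written the reduction does not establish the $(3/2-\delta)$ bound.
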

\begin{theorem}\label{thm:eccentricities}
    No algorithm can compute a $(5/3-\delta)$-approximation of all
    eccentricities in an unweighted constant degree graph in time
    $O(n^{2-\eps})$ for any $\eps,\delta > 0$ unless Conjecture~\ref{conj:ov}
    is false.
\end{theorem}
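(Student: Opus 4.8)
The plan is to reduce from Orthogonal Vectors (Conjecture~\ref{conj:ov}) to the problem of computing all eccentricities, and the bottleneck is doing so while keeping the maximum degree bounded by a constant. I would start from the known sparse-graph reduction of Abboud et al.~\cite{AbboudWW16} for the $5/3$-approximation of eccentricities: given OV instance $A,B\subseteq\{0,1\}^d$, one builds a graph with a layer of vertices for $A$, a layer for $B$, a layer for the $d$ coordinates, plus a few auxiliary gadget vertices, wired so that an ``$A$-type'' vertex has eccentricity roughly $2$ (say, scaled to some value $L$) if and only if the corresponding vector is orthogonal to some vector in $B$, and eccentricity roughly $3$ (i.e.\ $(5/3)L$) otherwise, while all other vertices have eccentricity at most $L$. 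An $O(n^{2-\eps})$-time $(5/3-\delta)$-approximation algorithm would then distinguish the two cases and hence solve OV in subquadratic time, contradicting the conjecture.

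The main work, as in Theorems~\ref{thm:diameter} and~\ref{thm:radius}, is to replace every high-degree vertex by a low-degree gadget without disturbing the relevant distances. The natural device is to expand each vertex $v$ of degree $\Delta$ into a balanced binary tree (or a path/cycle) on $\Theta(\Delta)$ nodes, attaching the original edges to the leaves; this keeps degree $\le 3$ but inflates pairwise distances by an additive $O(\log\Delta)=O(\log n)$ term. Since the sparse construction uses only a constant number of distance ``levels'' (the gap is between $2$ and $3$ in the unscaled graph), I would first scale the whole construction up by a factor $t=\Theta(\log n)$ — subdividing each original edge into a path of length $t$ — so that the additive $O(\log n)$ distortion introduced by the degree-reduction trees becomes a lower-order term that can be absorbed without closing the multiplicative $5/3$ gap. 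One then chooses constants so that the ``orthogonal'' eccentricity is at most $(2+o(1))t$ and the ``non-orthogonal'' eccentricity is at least $(3-o(1))t$, preserving a ratio arbitrarily close to $3/2$... wait, for eccentricities the relevant ratio is the $5/3$ one coming from the three-versus-five (or two-versus-$10/3$) structure used in~\cite{AbboudWW16}; I would reuse exactly their level values and only verify the gap survives the $O(\log n)$ additive slack.

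Concretely the steps are: (i) recall the sparse eccentricity construction and its correctness, isolating the set of vertices whose eccentricity encodes the OV answer; (ii) subdivide every edge into a path of length $t=\Theta(\log n)$ and argue the eccentricity gap scales by $t$; (iii) replace each remaining vertex of degree $>3$ by a binary-tree gadget, bounding the extra distance by $c\log n$ for a fixed constant $c$ and checking that the new graph still has $O(n)$ vertices and edges and maximum degree $3$; (iv) show that the two cases of OV now correspond to maximum encoded eccentricity $\le \alpha t$ versus $\ge \beta t$ with $\beta/\alpha$ still exceeding $5/3-\delta$ for all large $n$, hence a $(5/3-\delta)$-approximation in time $O(N^{2-\eps})$ on a graph with $N=O(n)$ vertices yields an $O(n^{2-\eps})$ OV algorithm; (v) conclude under Conjecture~\ref{conj:ov}.

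The step I expect to be the real obstacle is (iii) together with the accounting in (iv): the degree-reduction gadgets must be inserted so that the \emph{specific} short certificates of the sparse construction (the path through an orthogonal coordinate) are only lengthened by an additive $O(\log n)$, \emph{and simultaneously} no \emph{new} short path is created that would shrink the large eccentricities in the non-orthogonal case. Getting both directions of the distance bounds to hold for the binary-tree (rather than path) expansion — in particular ensuring the gadget does not act as a low-diameter shortcut between the original neighbors of a high-degree vertex, while still keeping every neighbor within $O(\log n)$ of the gadget's ``center'' — is the delicate part, and it is exactly where the choice of $t=\Theta(\log n)$ and the constants hidden in the $o(1)$ terms must be pinned down carefully. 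I would handle it by keeping the high-degree vertices confined to a bounded-size ``hub'' region of the construction (so only $O(1)$ distinct high-degree vertices need gadgets, each of degree $O(n)$) if the sparse construction allows it, or otherwise by a uniform tree-replacement whose depth is logarithmic in the local degree.
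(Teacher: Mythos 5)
There is a genuine gap, and it sits exactly where you locate the ``delicate part,'' but it is more severe than you acknowledge. Your plan --- take the sparse construction of Abboud et al., subdivide every edge into a path of length $t=\Theta(\log n)$, and expand high-degree vertices into binary trees, absorbing an additive $O(\log n)$ distortion --- is precisely the generic reduction that the paper argues does \emph{not} preserve the approximation factor. The issue is not that the certificate paths get lengthened by $O(\log n)$, nor that the tree gadgets might create shortcuts; it is that the subdivision and gadget vertices are themselves new \emph{targets} over which eccentricities are taken. In the ``no orthogonal pair'' case, the farthest vertex from a designated $A$-vertex is typically no longer an original vertex but a vertex in the middle of some length-$t$ subdivided edge (or deep inside a tree gadget attached to a far vertex), which inflates the small-case eccentricity by an additive $\Theta(t)$ term --- an error that scales with $t$, so choosing $t=\Theta(\log n)$ ``sufficiently large'' cannot absorb it. This is exactly the phenomenon the paper illustrates for diameter, where the naive subdivision turns the $2$-vs-$3$ gap into $\tfrac{5}{2}$-vs-$3$ and only rules out a $(6/5-\delta)$-approximation; the analogous degradation destroys the $5/3$ gap for eccentricities. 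Your fallback of confining high-degree vertices to an $O(1)$-size hub also does not apply here: in the OV-based constructions the coordinate vertices (there are $d=\omega(\log n)$ of them, each of degree up to $2n$) and all $2n$ vector vertices (degree up to $d$) have super-constant degree, so essentially every vertex needs a gadget.

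What is missing is the construction that makes the auxiliary vertices harmless. The paper's $OV_{dia}$-graph does this by (i) attaching \emph{shortcut trees} that connect all the vector-tree roots $a_r$ (and, separately, all $b_r$) within distance $2p+O(\log n)$ of one another, so that every mid-path and internal-tree vertex is provably within $(3+o(1))p$ of every $a_r$ and has eccentricity at most $(4+o(1))p$; and (ii) appending pendant paths of length $p$ ending at $a_p,b_p$, so that only these designated pendant endpoints can realize the large distance $6p$ in the orthogonal case. With that structure, the eccentricity of every $a_r$ is $(3+o(1))p$ when no orthogonal pair exists, while some $a_r$ has eccentricity at least $(5+o(1))p$ (via the far $b_p$) when one does --- giving the clean $5/3$ gap the theorem needs. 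Your proposal does not contain this idea, and without it step (iv) of your outline fails: the two cases are not separated by a ratio exceeding $5/3-\delta$.
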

\begin{theorem}\label{thm:rc}
    No algorithm can compute a $(2-\delta)$-approximation of the reach
    centrality of any node in an unweighted constant degree graph in time
    $O(n^{2-\eps})$ for any $\eps,\delta > 0$ unless Conjecture~\ref{conj:ov}
    is false.
\end{theorem}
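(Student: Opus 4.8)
The plan is to reduce from Orthogonal Vectors (Conjecture~\ref{conj:ov}). Recall the classical OV-based lower bound for RC in sparse graphs from Abboud et al.~\cite{AbboudGV15}: one builds a graph on the vector-sets $A$ and $B$ (plus coordinate gadgets) together with a central node $u$ such that $RC(u)$ is large (roughly twice the "baseline" distance) if and only if there is an orthogonal pair $a \in A$, $b \in B$. Concretely, one typically creates a path of length $\ell$ from a source side through $u$ to a sink side, where passing "through" $u$ on a shortest $s$--$t$ path with $d(s,t)=d(s,u)+d(u,t)$ and $\min(d(s,u),d(u,t))$ maximized corresponds exactly to selecting $a$ and $b$ and verifying orthogonality coordinate by coordinate. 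The source vertex for $a$ connects to the coordinate gadgets it "has"; a shortest path can only route through $u$ to a far sink associated with $b$ when $a$ and $b$ share no coordinate, which stretches the reach to roughly $2\ell$ instead of $\ell$. Any $(2-\delta)$-approximation of $RC(u)$ would then distinguish the two cases, contradicting the OV conjecture. My first step is to write down this construction carefully so that the graph is sparse, then make it bounded-degree.

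The second — and main — step is the degree reduction, which is exactly the novelty the paper advertises. The natural first attempt, replacing each high-degree vertex $v$ (degree $\Delta_v$) by a balanced binary tree with $\Delta_v$ leaves, blows up all distances passing through $v$ by an additive $\Theta(\log \Delta_v) = \Theta(\log n)$ term. This is fatal for an exact or small-factor statement unless the "signal" distances are themselves super-logarithmic and, crucially, the additive overhead is \emph{the same} along every source--sink route so that the yes/no gap is preserved multiplicatively. So the hard part will be engineering the gadgets so that every $s$--$t$ pair relevant to $RC(u)$ incurs an \emph{identical} additive padding from the tree-replacements, i.e. the trees must be perfectly balanced and traversed symmetrically, or one pads shorter branches with extra degree-2 vertices. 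I expect to handle this by (i) using the same uniform fan-out (degree 3: one parent edge plus two child edges) everywhere, (ii) padding so all coordinate-gadget trees have the same depth $h = \lceil \log_2 d\rceil$ (and the same for the $A$-side and $B$-side "selector" trees of depth $\approx \log n$), and (iii) adding a global additive constant to the intended reach by lengthening the through-$u$ spine so that the multiplicative ratio between the orthogonal case and the non-orthogonal case stays at $2 - o(1)$, which suffices to rule out a fixed $(2-\delta)$-approximation once $n$ is large. The node $u$ itself must have degree $\le 3$, so it too gets expanded into a small balanced tree, and one picks $u$ to be a specific vertex in that gadget so that $RC(u)$ is well-defined and tracks the original quantity.

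The third step is the correctness accounting: I would prove two lemmas, one showing that in the "no orthogonal pair" case every $s$--$t$ path with $d(s,t)=d(s,u)+d(u,t)$ has $\min(d(s,u),d(u,t)) \le L$ for the appropriate small bound $L$ (so $RC(u) \le L$), and one showing that in the "orthogonal pair exists" case there is a witnessing $s,t$ with $\min(d(s,u),d(u,t)) \ge (2-\delta')L$ for $\delta' \to 0$, using the padding to make the two halves exactly balanced through $u$. Combining these, a $(2-\delta)$-approximation algorithm for RC running in $O(n^{2-\eps})$ time on the constructed graph — which has $N = \Ot(n)$ vertices, $O(N)$ edges, and max degree $3$ — would decide OV on instances of size $n$ in $O(N^{2-\eps}) = n^{2-\eps+o(1)}$ time, contradicting Conjecture~\ref{conj:ov}. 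The only delicate point beyond the balancing is verifying that the tree-replacement does not create spurious short paths between two sinks or two sources that would collapse distances; I would rule this out by making the $A$-side and $B$-side gadgets reachable from $u$ only through the designated spine, so all "cross traffic" is forced through $u$, exactly as in the sparse construction.
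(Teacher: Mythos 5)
Your high-level plan---an OV-graph core, a central node $u$ on a long spine, and a gap of $L$ versus $2L$---is in the same spirit as the paper, but there is a genuine gap exactly where the real work lies. First, you misidentify the main obstacle: the additive $\Theta(\log n)$ overhead from replacing high-degree nodes by balanced trees is harmless, since all edge-replacing paths are taken of length $p=\omega(\log n)$ and the logarithmic terms vanish into the $o(1)$; no ``identical padding along every route'' is needed. The real obstacle, which the paper emphasizes and your proposal leaves unresolved, is that once edges become length-$p$ paths, nodes in the \emph{middle} of those paths become legitimate witnesses for $RC(u)$ even when no orthogonal pair exists, so the ``small'' case is not small---it is $\Theta(p)$---and one must engineer the construction so that the orthogonal case is at least \emph{twice} this unavoidable baseline. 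Saying you will ``lengthen the through-$u$ spine so the ratio stays $2-o(1)$'' names the goal, not the mechanism; a generic blow-up of the sparse RC construction does not automatically give ratio $2$ (just as the naive blow-up for diameter degrades $3/2$ to $6/5$). The paper's proof consists precisely of this missing quantitative gadgetry: it takes the $OV_{dia}$ graph (vector trees, $c_A$/$c_B$ trees, $A$- and $B$-shortcut trees, and pendant length-$p$ paths ending at $a_p$, $b_p$) and joins the two shortcut-tree roots by a spine of length $2(p-\log n)$ with $u$ at its midpoint. Then (i) if no pair is orthogonal, any witness pair lying more than $(\tfrac12+\delta)p$ beyond the shortcut trees has a strictly shorter route through the $c$-trees (about $(3-2\delta)p$ versus $(3+2\delta)p$ through $u$), so $RC(u)\le(\tfrac32+o(1))p$; and (ii) an orthogonal pair yields an $a_p$--$b_p$ path through $u$ of length exactly $6p$, matching the $6p$ lower bound from the diameter analysis, hence a shortest path with $\min(d(a_p,u),d(u,b_p))=3p$, so $RC(u)\ge 3p$. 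Your proposal contains no analogue of either calculation, and without them there is no argument that the achievable separation is $2-o(1)$ rather than some smaller factor.

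Second, your closing remark that you would make the $A$- and $B$-side gadgets reachable from $u$ only through the designated spine ``so that all cross traffic is forced through $u$'' would, taken literally, destroy the reduction: the mechanism requires that when $a$ and $b$ are \emph{not} orthogonal their shortest path bypasses $u$ via the coordinate gadgets, which is what caps $RC(u)$ in the no-pair case. If all $A$--$B$ traffic were forced through $u$, then $RC(u)$ would be large in both cases and a $(2-\delta)$-approximation could not distinguish them. What must be prevented is only that the tree replacements create shortcuts undercutting the $c$-tree routes or the $6p$ bound, and this follows from the structure of the trees and paths, not from funneling all traffic through $u$.
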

\begin{theorem}\label{thm:bc}
    No algorithm can compute the betweenness centrality of any node in an
    unweighted constant degree graph in time $O(n^{2-\eps})$ for any $\eps > 0$
    unless Conjecture~\ref{conj:ov} is false.
\end{theorem}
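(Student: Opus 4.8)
The plan is to reduce from Orthogonal Vectors (Conjecture~\ref{conj:ov}). Given sets $A,B\subseteq\{0,1\}^d$ of $n$ vectors each, I would build an unweighted graph $G$ of constant maximum degree and size $O(n\,\mathrm{poly}(d)) = n^{1+o(1)}$ together with a designated node $u$ such that the betweenness centrality $BC(u)$ takes one of two distinct, easily distinguished values depending on whether there is an orthogonal pair. Concretely, I would start from the standard OV-to-distance gadget: a node $a_i$ for each $a\in A$, a node $b_j$ for each $b\in B$, and a coordinate layer $c_1,\dots,c_d$, with $a_i$ adjacent to $c_k$ iff $a_i[k]=1$ and likewise $b_j$ adjacent to $c_k$ iff $b_j[k]=1$, so that there is a length-$2$ path $a_i$--$c_k$--$b_j$ exactly when $a_i,b_j$ share a coordinate, i.e. iff they are \emph{not} orthogonal. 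The node $u$ should be placed so that it lies on shortest $a_i$--$b_j$ paths, and I want $\sigma_{a_i,b_j}(u)/\sigma_{a_i,b_j}$ to contribute $0$ when the pair is non-orthogonal (because a strictly shorter path avoiding $u$ exists) and to contribute something bounded away from $0$ when the pair is orthogonal (because the only shortest routes go through $u$). Summing over all $n^2$ pairs, the two cases give $BC(u)$ values that differ by at least, say, a constant, so any $O(n^{2-\eps})$ algorithm computing $BC(u)$ decides OV in $O(n^{2-\eps+o(1)})$ time.

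The first technical issue is that the natural construction above has the coordinate nodes $c_k$ with degree up to $2n$, so the graph is \emph{not} constant-degree. This is the crux of the paper, so I would reuse the degree-reduction machinery developed for Theorems~\ref{thm:diameter}--\ref{thm:rc}: replace each high-degree node by a balanced binary tree (or a path/caterpillar of small nodes) whose leaves are the former neighbours, so every node has degree $\le 3$. A subtlety special to BC, which does not arise for pure distance problems, is that such a gadget must preserve not only distances (up to a uniform additive shift, which can be absorbed since all relevant $s$--$t$ pairs are affected symmetrically) but also the \emph{number} of shortest paths, or at least the ratios $\sigma_{s,t}(u)/\sigma_{s,t}$ that enter the sum for the particular node $u$. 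Binary trees are convenient here because between any two leaves there is a \emph{unique} path, so routing through a tree gadget multiplies path counts by $1$; I would therefore route everything through trees and keep $u$ on a part of the graph where I can control $\sigma_{\cdot}$ exactly.

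Next I would nail down the quantitative gap. Using the $\pm$-orthogonality-to-distance correspondence, design $G$ so that for every pair $(a_i,b_j)$ the distance $d(a_i,b_j)$ equals some fixed value $L$, and all shortest paths pass through a small ``hub'' region containing $u$ precisely when $a_i,b_j$ are orthogonal, whereas when they are not orthogonal there is an additional shortest path of the same length $L$ that bypasses $u$ (e.g. the short coordinate connection $a_i\text{--tree}(c_k)\text{--}b_j$ padded out to length $L$). By making the bypass path unique and the through-$u$ routes total, say, a known count, the term for a non-orthogonal pair is $\tfrac{(\text{through-}u\ \text{count})}{(\text{through-}u\ \text{count})+1}$ while for an orthogonal pair it is exactly $1$ (or vice versa); either way the per-pair contributions differ by a constant, so $BC(u)$ in the two global cases differ by $\Omega(1)$. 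To keep everything robust I would also add, outside the $A$--$B$ interaction, a fixed ``background'' structure contributing a known constant to $BC(u)$ independent of the instance, so the final threshold test is just a comparison against an explicitly computed number.

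The main obstacle, as flagged, is the simultaneous control of distances \emph{and} shortest-path counts through the constant-degree gadgets: unlike the diameter/radius/RC reductions where only $d(\cdot,\cdot)$ matters, here a gadget that accidentally creates extra equal-length routes can shift $\sigma_{s,t}$ and wreck the gap. I expect to spend most of the effort verifying that (i) the tree-based degree reduction introduces no unintended shortest paths between the relevant terminal pairs, (ii) the length-padding on the non-orthogonal bypass path can be done with a degree-$3$ path without creating alternatives, and (iii) the node $u$ sits where its through-count is the same for all orthogonal pairs, so that the sum cleanly separates the two cases. Once these are checked, the reduction runs in $n^{1+o(1)}$ time, the output graph has maximum degree $3$, and Theorem~\ref{thm:bc} follows.
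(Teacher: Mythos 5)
Your plan hits exactly the obstacle that the paper identifies as the crux of the BC case, and it does not resolve it. In a single-graph reduction, $BC(u)$ is a sum over \emph{all} pairs of nodes, and in a constant-degree construction almost all of these pairs are ``background'' pairs: internal nodes of the padding paths, nodes of the degree-reduction trees, nodes of the coordinate gadgets. Whether and how these pairs route through $u$, and with what ratios $\sigma_{s,t}(u)/\sigma_{s,t}$, depends on the input vectors (how many coordinates each vector has set, how many coordinates each non-orthogonal pair shares, where the break-even points along the padded paths fall). So the ``explicitly computed number'' you want to threshold against does not exist: the no-orthogonal-pair baseline fluctuates with the instance by amounts far larger than the $O(1)$ signal contributed by a single orthogonal pair (each pair contributes at most $1$ to the sum), and a comparison against a precomputed constant cannot decide OV. A second, related gap is your assumption that the bypass for a non-orthogonal pair can be made unique: a non-orthogonal pair may share many coordinates, so $\sigma_{a,b}$ itself is instance-dependent and the per-pair contribution cannot be pinned to a fixed value such as $k/(k+1)$. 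You flag the verification of (i)--(iii) as remaining work, but this is not a routine check; as stated, the single-graph threshold approach fails.

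The paper sidesteps precisely this counting problem with a different idea that is absent from your proposal: it builds \emph{two} almost-identical constant-degree graphs, $G_1$ and $G_2$, where $G_2$ differs only by pendant nodes $b'$ attached at the roots $b_r$, and it compares $BC(x_2)$ in $G_2$ with $BC(x_1)$ in $G_1$. Since the pendant nodes do not alter any shortest path between pre-existing nodes, all the instance-dependent background cancels in the difference, and what remains is exactly the contribution of paths with an endpoint in the $b'$ nodes; this can be counted explicitly, yielding the clean test $BC(x_2) > BC(x_1) + n^2\bigl(\tfrac{p}{2}+\lg d\bigr) + n(2n-2)$ if and only if an orthogonal pair exists (two BC queries still contradict an $O(n^{2-\eps})$ algorithm). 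To salvage your write-up you would either need this two-graph difference trick (or an equivalent cancellation mechanism), or a construction in which the background contribution to $BC(u)$ is provably instance-independent, which your current design does not provide.
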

All of our results above match the corresponding hardness results for sparse
graphs, extending them to graphs with constant maximum degree. Furthermore, our
hardness result for betweenness centrality improves on the result
of~\cite{AbboudGV15} in general graphs by ruling out truly subquadratic
algorithms in \emph{unweighted} graphs.
As mentioned all of these problems can be solved in $\Ot(n^2)$ or even $O(n^2)$
time in unweighted graphs with constant max degree\footnote{If we assume unique
or even polynomial number of shortest paths for BC.}, thus all of our results
are tight. Furthermore, a more efficient $\Ot(m\sqrt{n})$ algorithm exists for
$3/2$-approximating the diameter of a graph~\cite{RodittyW13,ChechikLRSTW14}
and for $5/3$-approximating all eccentricities of the
graph~\cite{ChechikLRSTW14}. Thus, our approximation guarantees are also
optimal for these kinds of problems. We note that our reductions in general
produces CLBs for graphs with maximum degree 4, however a simple splitting of
nodes of degree 4 improves this to a maximum degree of 3.

\subsection{Challenges and techniques}
Many of the previous results in Table~\ref{tab:results} work for sparse graphs,
and it is often assumed that such reductions translate directly to bounded
degree graphs. A common technique in obtaining such reductions from sparse
graphs to constant degree graphs (see e.g.~\cite{AlstrupDKP16}) is to split
each vertex and insert edges of weight zero.
However, for the results of this paper we seek CLBs for \emph{unweighted}
graphs, and thus introducing edges of weight $0$ is not an option.
In~\cite{AbboudCK16} a near-linear lower bound is
given for computing a $(3/2-\delta)$-approximation of diameter in the CONGEST
model even in sparse networks. They also present a reduction to bounded-degree
graphs, however this reduction does not preserve the approximation ratio of the
lower bound and only excludes exact computation in bounded degree graphs. This
problem seems inherent in reductions from sparse graphs to graphs with bounded
maximum degree. The most common reductions replace nodes of degree $x$ by a
balanced binary tree with $x$ leaves, and each such leaf then has an edge to
the binary trees of the corresponding neighbours. By reducing in this way two
nodes in the new graph corresponding to the same node in the original graph may
have $\Omega(\log n)$ distance, while two nodes corresponding to different
original nodes may have distance $1$. Such an imbalance can be addressed by
replacing edges between trees by sufficiently long paths. However, this may
lead to worse approximation guarantees for problems like diameter,
since the longest distance in the graph may now involve a node ``in the
middle'' of such a path, which corresponds to an original edge rather than an
original node. As an example, applying this method to the diameter reduction of
Roditty and Vassilevska Williams~\cite{RodittyW13} would only rule out a
$(6/5-\delta)$-approximation and not the original factor of $3/2-\delta$. In
the reductions in this paper we therefore introduce reductions which are based
on similar ideas as \cite{RodittyW13,AbboudWW16,AbboudGV15}, but are carefully
tailored to graphs with constant degree in order to obtain the same
approximation guarantees. Furthermore, when reducing to betweenness centrality
we introduce many new nodes which we have to sum over. We avoid this
complicated counting argument by creating several related graphs and look at
the change in betweenness centrality to complete the reduction.

\section{Hardness of eccentricity-related problems}
In this section we will prove Theorems~\ref{thm:diameter},~\ref{thm:radius},
and~\ref{thm:eccentricities}. Our reductions are inspired by standard techniques
as used in~\cite{RodittyW13,AbboudWW16,Dahlgaard16,CairoGR16} and many others.
These reductions all use the OV-graph described below as a basis and add
different shortcuts and nodes based on the problem at hand. In the case of
constant degree graphs we have to be careful with the way we define deroutes,
as we seek to get the same approximation guarantees as in the sparse case (see
Table~\ref{tab:results}). A natural approach is to simply take the graph from
eg.~\cite{RodittyW13} and replace each node by a binary tree and each edge by a
sufficiently long path (say length $x$). In~\cite{RodittyW13} they create a
graph where the diameter is either $3$ or $2$, and by doing this reduction we
get a graph, where we would hope that the diameter is either $(3+o(1))x$ or
$(2+o(1))x$. However, since the edges are replaced by paths we instead end up with
either $(3+o(1))x$ or $(\frac{5}{2}+o(1))x$ giving a much worse bound on the
approximation. Below we show how to circumvent this bound by extending the
OV-graph more carefully.

We will now define a general graph construction, the $OV$-graph, for reductions
based on Conjecture \ref{conj:ov}, which has been used in several papers
previously~\cite{AbboudV14,AbboudWW16,Dahlgaard16,CairoGR16,AbboudGV15,Williams05}:

Given an instance $A,B\subseteq \{0,1\}^d$ of the OV problem, we construct the
\emph{$OV$-graph} as follows:
\begin{itemize}
    \item For each vector $a \in A$ and $b \in B$ create a node. We will
        overload the notation and use $a$ and $b$ to refer to both the vector
        and node when this is clear from the context.
    \item Create a set $C$ of $d$ nodes denoted $c_1, c_2, \ldots c_d$.
    \item For each node $a$ create an edge $(a, c_i)$ if $a[i] = 1$ for each $1
        \leq i \leq d$. Do the same for each node $b$.
\end{itemize}
The following observation is key in all reductions using this graph.
\begin{observation}\label{obs:ovgraph}
A pair $a \in A$, $b \in B$ is orthogonal if and only if $d(a, b) > 2$.
\end{observation}

For reductions based on Conjecture \ref{conj:hs} we will also use the
$OV$-graph, but since we are looking for an $a \in A$ that is \emph{not}
orthogonal to \emph{any} $b \in B$, we will instead try to determine whether
$d(a, b) = 2$ for all $b \in B$.

We note, that we will assume that each node $a$ and $b$ has an edge to at least
one $c_i$. Otherwise this would be the all zeroes vector, and we can easily
answer the OV problem.

\subsection{Diameter in constant degree graphs}\label{sec:diameter}
We are now ready to prove Theorem~\ref{thm:diameter}. Our reduction to diameter
in constant degree graphs uses the $OV$-graph with the following modifications:

\begin{itemize}
    \item For each node $u \in A \cup B$ insert a balanced binary tree with $d$
        leaves rooted at $u$. Denote the trees \emph{vector trees} and the root
        vertices $a_r$ and $b_r$ respectively.

    \item For each node $c \in C$ insert two balanced binary trees with $n$
        leaves, both rooted at $c$. One for handling edges to $A$, and one for
        $B$. Denote the tree for handling edges to $A$ and $B$ respectively the
        $c_A$\emph{-tree} and the $c_B$\emph{-tree}.

    \item For each each edge $(a, c)$ of the $OV$-graph, insert a path of
        length $p$ (to be fixed later) from a leaf in the tree rooted
        at $a_r$, to a leaf in the $c_A$-tree. Do the same for edges
        $(b,c)$. We assign paths to leaves in such a way that each leaf has
        degree at most $2$ in the final graph.

    \item Create a balanced binary tree with $n$ leaves and connect each leaf
        to a node $a_r$ with a path of length $p$, such that each $a_r$ is
        connected to a distinct leaf. Do the same for the nodes $b_r$.
        Call these trees the \emph{$A$-shortcut tree} and \emph{$B$-shortcut
        tree} respectively.

    \item At each root $a_r$ and $b_r$ of a vector tree, add a path of length
        $p$ and denote the end of each path by $a_p$.
\end{itemize}
We call this the $OV_{dia}$-graph. Figure~\ref{fig:diameter} contains an
example $OV_{dia}$-graph for reference. It is easy to verify that this graph
has maximum degree $4$ with only the nodes $a_r$, $b_r$, and $c_i$ having this
degree.
\begin{figure}[ht]
    \centering
    \includegraphics[width=1.0\textwidth]{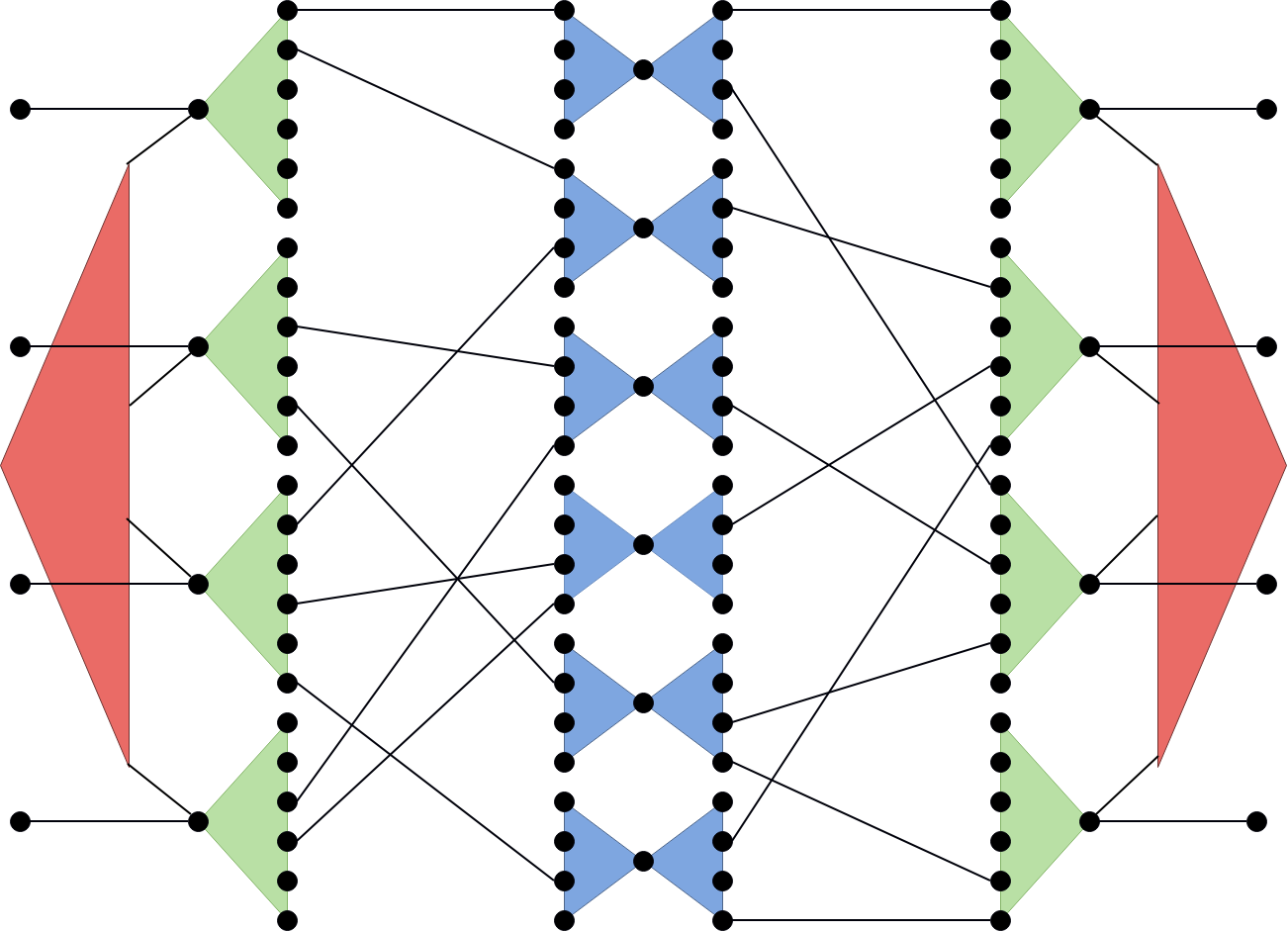}
    \caption{The $OV_{dia}$-graph, note that all lines in the graph are paths of length $p$. Green triangles are vector trees, blue triangles are $c_A$ and $c_B$ trees and red triangles are the $A$ and $B$-shortcut trees}
    \label{fig:diameter}
\end{figure}
We make the following claims about distances in the $OV_{dia}$-graph, the
proofs can be found in appendix \ref{appendix:diaproofs}:

\begin{claim}\label{claim:notorthogonal}
    For two vectors $a \in A$, $b \in B$ that are \emph{not} orthogonal,
    $d(a_p, b_p) \leq 4p + 2 \lg n + 2 \lg d$.
\end{claim}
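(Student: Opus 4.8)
The plan is to trace an explicit path between $a_p$ and $b_p$ in the $OV_{dia}$-graph, using the fact that $a$ and $b$ are not orthogonal to find a common coordinate $c_i$ with $a[i] = b[i] = 1$, and then route through the $c_A$- and $c_B$-trees attached to $c_i$. First I would walk from $a_p$ back along the length-$p$ pendant path to the vector-tree root $a_r$ (cost $p$), then descend the vector tree at $a_r$ to the leaf incident to the edge $(a, c_i)$ (cost at most $\lg d$), then traverse the length-$p$ path from that leaf to a leaf of the $c_{i,A}$-tree (cost $p$), then go up the $c_{i,A}$-tree to $c_i$ (cost at most $\lg n$). Symmetrically, from $c_i$ I would go down the $c_{i,B}$-tree to the leaf incident to $(b, c_i)$ (cost $\lg n$), along the length-$p$ path to the corresponding leaf of $b$'s vector tree (cost $p$), up the vector tree to $b_r$ (cost $\lg d$), and finally along the pendant path to $b_p$ (cost $p$). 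Summing gives $4p + 2\lg n + 2\lg d$, which is exactly the claimed bound.

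The only subtlety is making sure the tree-descent/ascent costs are each at most $\lg n$ or $\lg d$: this uses that the trees are balanced binary trees with $d$ leaves (vector trees) or $n$ leaves ($c_A$- and $c_B$-trees), so the depth of each is at most $\lceil \lg d\rceil$ or $\lceil \lg n\rceil$ respectively; I would either absorb the ceilings into the stated bound or note that the paper's $\lg$ already denotes the rounded-up depth. I also need the common coordinate $i$ to exist — this is precisely the hypothesis that $a$ and $b$ are not orthogonal, i.e. $\langle a, b\rangle \neq 0$ — and I need that the edges $(a, c_i)$ and $(b, c_i)$ of the original $OV$-graph actually give rise to the length-$p$ connector paths in the construction, which they do by definition of the $OV_{dia}$-graph.

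I do not expect a genuine obstacle here: this is an upper bound on a distance, so exhibiting one valid walk suffices, and the construction was designed so that such a walk exists. The one thing to be careful about is not to accidentally route through the $A$-shortcut or $B$-shortcut trees (which would be the relevant gadget for the orthogonal / far case), since using them here would give a longer walk and not establish the tight $4p$ leading term; the path above deliberately avoids them. I would present the argument as a short enumerated list of path segments with their lengths, then add the lengths, rather than grinding through anything more elaborate.
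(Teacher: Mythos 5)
Your proposal is correct and matches the paper's own argument: both exhibit the explicit walk $a_p \to a_r \to$ (vector-tree leaf) $\to c_{i,A}$-tree $\to c_i \to c_{i,B}$-tree $\to$ ($b$'s vector-tree leaf) $\to b_r \to b_p$ through a common coordinate $c_i$, summing $4p$ for the four length-$p$ paths plus the heights $2\lg d$ of the vector trees and $2\lg n$ of the $c_A$/$c_B$-trees. Your version is merely a more explicit, segment-by-segment account of the same path.
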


\begin{claim}\label{claim:orthogonal}
    For two vectors $a \in A$, $b \in B$ that are orthogonal, $d(a_p, b_p) \geq
    6p$.
\end{claim}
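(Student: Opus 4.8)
The plan is to first peel off the pendant paths and then run a cut argument through $C$. Since $a_p$ and $b_p$ are the far endpoints of pendant paths of length $p$ attached at $a_r$ and $b_r$, and the internal vertices of these paths have degree $2$, every $a_p$--$b_p$ path must traverse the $p$ edges from $a_p$ to $a_r$ at its start and the $p$ edges from $b_r$ to $b_p$ at its end; hence $d(a_p,b_p)=2p+d(a_r,b_r)$, and it suffices to show $d(a_r,b_r)\ge 4p$. For this, I would observe that deleting the $d$ nodes of $C$ separates the ``$A$-side'' of the $OV_{dia}$-graph (the vector trees of $A$, the $A$-shortcut tree, the pendant paths, the root-less $c_A$-trees, and the connecting paths attached to these) from the symmetric ``$B$-side''. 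Since $a_r$ lies on the $A$-side and $b_r$ on the $B$-side, every $a_r$--$b_r$ path passes through some $c_i\in C$, so $d(a_r,b_r)\ge \min_{1\le i\le d}\bigl(d(a_r,c_i)+d(c_i,b_r)\bigr)$.

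Next I would establish two distance bounds on reaching a fixed $c_i$. First, unconditionally $d(a_r,c_i)\ge p$ and $d(c_i,b_r)\ge p$: the only neighbours of $c_i$ lie in its $c_A$- and $c_B$-trees, each of which is joined to the rest of the graph solely through length-$p$ connecting paths at its leaves, so any walk reaching $c_i$ from outside these two trees must fully cross at least one connecting path. Second, and crucially, if $a[i]=0$ then $d(a_r,c_i)\ge 3p$ (and symmetrically $d(c_i,b_r)\ge 3p$ if $b[i]=0$): when $a[i]=0$ there is no connecting path from $a$'s own vector tree to $c_i$'s $c_A$-tree, so a path from $a_r$ to $c_i$ must either (i) first reach the vector tree of some other $a'\in A$ with $a'[i]=1$, which forces traversing the $A$-shortcut tree and hence two connecting paths before a third one into $c_i$'s $c_A$-tree; or (ii) cross to the $B$-side, which already costs $\ge p$ to reach some $c_k$, then $\ge p$ more to get from $c_k$ into a $B$-vector tree, and then $\ge p$ again to enter $c_i$'s $c_B$-tree. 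Either way the total is at least $3p$.

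Finally I would invoke orthogonality: since $a$ and $b$ are orthogonal, for every coordinate $i$ at least one of $a[i]=0$, $b[i]=0$ holds, so combining the two bounds above gives $d(a_r,c_i)+d(c_i,b_r)\ge 3p+p=4p$ in every case. Taking the minimum over $i$ yields $d(a_r,b_r)\ge 4p$, and therefore $d(a_p,b_p)\ge 6p$, as claimed. (The $\Theta(\lg n+\lg d)$ terms coming from tree depths only increase these distances and play no role here, although they matter when combining this claim with Claim~\ref{claim:notorthogonal} to fix $p$.)

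The step I expect to be the main obstacle is the conditional ``$3p$'' bound: one must carefully enumerate every way a shortest path could enter $c_i$ when $a[i]=0$ --- through $c_i$'s $c_A$-tree after a detour via the $A$-shortcut tree and other vector trees, or through $c_i$'s $c_B$-tree after crossing to the $B$-side and possibly using the $B$-shortcut tree --- and verify in each case that at least three full connecting paths of length $p$ are used. Once that case analysis is pinned down, the remaining steps are routine.
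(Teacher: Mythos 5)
Your proof is correct, and it is organized differently from the paper's. The paper's appendix argument is a short detour-counting sketch: since $a$ and $b$ share no common $c_i$, the shortest $a_p$--$b_p$ path must pass through the vector tree of some other, non-orthogonal vector $a'$ (reached via the shortcut tree or another $c$-tree), and this detour adds at least $2p$ on top of the $4p$-type crossing cost from Claim~\ref{claim:notorthogonal}. You instead peel off the two pendant paths to reduce to $d(a_r,b_r)\ge 4p$, observe that $C$ is a vertex cut separating the $A$-side from the $B$-side, and then use orthogonality coordinate-wise: for every $i$ either $a[i]=0$ or $b[i]=0$, giving $d(a_r,c_i)+d(c_i,b_r)\ge 3p+p$. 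This is a cleaner and more rigorous decomposition than the paper's sketch, since it localizes exactly where each full length-$p$ segment must be paid and does not lean on an upper bound from the non-orthogonal case as a stand-in for a lower bound. One small imprecision in your case (i): reaching the vector tree of another $a'$ with $a'[i]=1$ does \emph{not} force the path through the $A$-shortcut tree --- it can instead pass through the $c_A$-tree of some other coordinate $c_k$ with $a[k]=a'[k]=1$ --- but that alternative route also traverses two full connecting paths of length $p$, so the $2p$ bound (and hence the $3p$ bound and the final $6p$) is unaffected; you should just state the case split as ``via the shortcut tree or via another $c$-tree'' rather than attributing it to the shortcut tree alone.
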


\begin{claim}\label{claim:vectortrees}
    For two nodes $u$ and $v$ in vector trees of $A$, $d(u, v) \leq 2p + 2\lg n + 2\lg d$.
    Symetrically for $B$.
\end{claim}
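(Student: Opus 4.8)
The plan is to bound $d(u,v)$ for two nodes $u$ and $v$ lying in vector trees of $A$ by routing through the $A$-shortcut tree, which acts as a low-diameter hub connecting all the roots $a_r$. First I would handle the case $u = v$ trivially, and then reduce the general case to bounding the distance from an arbitrary node of a vector tree to the root of a vector tree, plus the distance between two roots.

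The key steps, in order: (1) For any node $u$ in the vector tree rooted at some $a_r$, bound $d(u, a_r)$ by the height of a balanced binary tree with $d$ leaves, which is $\lceil \lg d \rceil \le \lg d$ (treating the $+o(1)$ sloppiness the way the paper does), so $d(u, a_r) \le \lg d$. (2) For two roots $a_r$ and $a_r'$, route each up its length-$p$ path into a leaf of the $A$-shortcut tree, across the shortcut tree (a balanced binary tree with $n$ leaves, hence height $\le \lg n$), and back down; this gives $d(a_r, a_r') \le 2p + 2\lg n$. Actually it suffices to note $d(a_r, \text{leaf of shortcut tree}) = p$ and the shortcut tree's diameter is $\le 2\lg n$, so the two length-$p$ legs plus the $\le 2\lg n$ inside the tree yield $\le 2p + 2\lg n$. (3) Combine: $d(u,v) \le d(u, a_r) + d(a_r, a_r') + d(a_r', v) \le \lg d + (2p + 2\lg n) + \lg d = 2p + 2\lg n + 2\lg d$. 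The argument for $B$ is identical using the $B$-shortcut tree.

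The only mildly delicate point — the "main obstacle," though it is minor — is making sure the chosen route is actually realized in the graph as defined, i.e. that one never needs to pass through a $c$-tree or along a vector-tree/$c$-tree connector path of length $p$ (which would cost an extra $\approx 2p$ and break the bound). This is exactly why the $A$-shortcut tree was introduced: it gives a shorter "backbone" connecting all $a_r$'s than going through the $C$ nodes. So the proof should explicitly observe that the claimed path uses only vector-tree edges, shortcut-path edges, and shortcut-tree edges, and that these suffice to connect any two vector-tree nodes of $A$. I would also remark that we only need an upper bound here, so we do not have to argue this route is shortest — any valid walk of the stated length certifies the claim.

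One should double-check the ceiling/floor conventions: a balanced binary tree with $k$ leaves has depth $\lceil \lg k \rceil$, so distances between leaves can be up to $2\lceil \lg k\rceil$; the claim as stated writes $2\lg n + 2\lg d$, so either the paper is implicitly absorbing ceilings into the $\lg$ notation or assuming $n, d$ are powers of two (the reduction is free to pad the OV instance). I would add a one-line remark to that effect rather than tracking ceilings through the calculation.
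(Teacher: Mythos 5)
Your proposal is correct and matches the paper's own argument: the paper also routes $u$ up to its root $a_r$, across the two length-$p$ paths and the $A$-shortcut tree to $a'_r$, and down to $v$, giving $2p + 2\lg n + 2\lg d$. Your extra remarks about only needing an upper bound and absorbing ceilings are fine but not substantively different.
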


\begin{claim}\label{claim:cnode}
    For a node $u$ in some $c_A$ or $c_B$ tree, $\epsilon(u) \leq 4p + 4 \lg n + 2
    \lg c$.
\end{claim}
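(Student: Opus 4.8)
The plan is a pure routing argument: I would show that from any $u$ lying inside a $c_A$- or $c_B$-tree one can reach every vertex of the $OV_{dia}$-graph within $4p + O(\lg n + \lg c)$ edges, which is exactly the content of the claim. By the symmetry of the construction it suffices to treat the case where $u$ lies in the $c_A$-tree rooted at some $c_i \in C$. Before doing so I would make one harmless cleanup assumption: that every $c_i$ has at least one neighbour in $A$ and at least one in $B$ in the underlying $OV$-graph. A coordinate that is set in no vector of $A$ (resp.\ of $B$) contributes $0$ to every dot product, so deleting it changes neither the orthogonality instance nor the reduction; this also guarantees the $OV_{dia}$-graph is connected, so $\epsilon(u)$ is finite. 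In particular $c_i$'s $c_A$-tree now contains some leaf $\ell$ joined by a length-$p$ path to a leaf of the vector tree of some $a^\ast$ with $a^\ast[i]=1$, and symmetrically on the $B$-side.

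The key step is a \emph{hub bound}: I would bound $d(u,\rho)$ for every vector-tree root $\rho$ (every $a_r$ and every $b_r$) by roughly $3p + O(\lg n)$. Routing $u \to c_i$ (at most $\lg n$), down to the pathed leaf $\ell$ (at most $\lg n$), along its path (exactly $p$), and up to $a^\ast_r$ (at most $\lg d$, where $d=|C|$) reaches $a^\ast_r$; from there the $A$-shortcut tree reaches every other $a_r$ in at most $2p + 2\lg n$ further steps (path up to a shortcut leaf, up to the shortcut root, down to another shortcut leaf, path down). Routing instead through $c_i$'s $c_B$-tree to some $b^\ast_r$ and then through the $B$-shortcut tree reaches every $b_r$ likewise. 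Hence $d(u,\rho) \le 3p + O(\lg n + \lg c)$ for every root $\rho$.

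Everything else is then a bounded detour off a root, handled by a short case analysis on vertex type: a vertex inside a vector tree is within $\lg d$ of its root; a terminal $a_p$ or $b_p$, and indeed any vertex on a pendant path, is within $p$ of its root; a vertex on a length-$p$ connector between a vector tree and a $c_j$-tree is within $p$ of the incident vector-tree leaf, hence within $p + \lg d$ of that root; a vertex on a connector to a shortcut tree, or inside a shortcut tree, is within $p + \lg n$ of a root; a vertex inside a $c_j$-tree with $j \ne i$ is reached from a root $a_r$ with $a[j]=1$ (or $b_r$ with $b[j]=1$) by descending $\lg d$, crossing the length-$p$ connector, and climbing the $c_j$-tree, an extra $\lg d + p + O(\lg n)$; and a vertex inside $c_i$'s own two trees is within $O(\lg n)$ of $u$ directly. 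Combining each of these with the hub bound, the worst case is the $c_j$-tree case, yielding $\epsilon(u) \le 4p + O(\lg n + \lg c)$, which is the stated bound (the precise constants on the logarithmic terms are immaterial, since the choice of $p$ makes them lower order anyway).

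The step I expect to be the main obstacle is not the idea but the bookkeeping around the binary trees. The assignment of attached paths to leaves is adversarial, so the only safe estimate for travelling between a tree's root and a leaf carrying a path is the full depth of the tree, not $O(1)$; and a walk from the $A$-side structures to the $B$-side structures must always pass through the paired $c_A$- and $c_B$-trees of some vertex of $C$, which is exactly what the cleanup assumption guarantees is possible. Once these two points are pinned down, checking that every vertex type falls under one of the cases above and that all the additive logarithmic contributions stay within the claimed $4\lg n + 2\lg c$ is a routine accounting exercise.
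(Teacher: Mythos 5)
Your proposal is correct and follows essentially the same route as the paper's proof: the worst-case paths you use (from $u$ through $c_i$'s connector to a vector tree, across a shortcut tree, then down a pendant path to $a_p$ or across to another $c_j$-tree) are exactly the routes in the paper's two cases, and like the paper your logarithmic terms slightly exceed the stated constants, which is immaterial since $p=\omega(\log n)$. Your write-up is in fact somewhat more careful than the paper's: you handle every vertex type rather than asserting the farthest node is an $a_p$/$b_p$ or a $c'_A$/$c'_B$-tree node, and you state and justify the assumption that every $c_j$ has neighbours on both sides, which the paper's proof uses only implicitly.
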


\begin{claim}\label{claim:shortcut}
    For a node $u$ in the $A$-(or $B$-)shortcut tree, $\epsilon(u) \leq 4p + 4\lg n + 2
    \lg c$.
\end{claim}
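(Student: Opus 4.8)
The plan is to bound $d(u,w)$ by $4p + O(\lg n + \lg c)$ for \emph{every} node $w$ of the $OV_{dia}$-graph; since there are only a constant number of gadget types to consider, taking the maximum over $w$ then gives $\epsilon(u) \le 4p + 4\lg n + 2\lg c$, and the case $u$ in the $B$-shortcut tree follows by the symmetry between $A$ and $B$. Two conventions make the accounting clean: every shortcut tree and every $c_A/c_B$-tree has depth at most $\lceil \lg n\rceil$ while every vector tree has depth at most $\lceil \lg d\rceil$; and we may assume without loss of generality that every coordinate $i$ is used on both sides, i.e.\ that some $a\in A$ and some $b\in B$ have $a[i]=b[i]=1$ (a coordinate failing this contributes $0$ to every inner product and may be deleted from the OV instance). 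Fix $u$ in the $A$-shortcut tree.

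First I would handle the ``$A$-side'' nodes. Since the $A$-shortcut tree has diameter at most $2\lg n$ and each $a_r$ hangs from one of its leaves by a path of length $p$, we get $d(u,a_r)\le 2\lg n + p$ for \emph{every} $a\in A$ at once. Hence every node of an $A$-vector tree is within $2\lg n + p + \lg d$ of $u$, every $a_p$ within $2\lg n + 2p$, and every node of the $A$-shortcut tree within $2\lg n$. Walking one further length-$p$ path out of an $A$-vector-tree leaf and up a $c_A$-tree gives $d(u,c_i)\le 2p + O(\lg n + \lg d)$ for every $c_i$ (the standing assumption guarantees some such path exists), so every node of every $c_A$- or $c_B$-tree and every node on a path incident to one of these is within $2p + O(\lg n + \lg d)<4p$ of $u$.

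The only case that actually reaches the stated bound is a ``$B$-side'' node $w$: some $b_r$, some $b_p$, a node of a $B$-vector tree, a node of the $B$-shortcut tree, or a node on a path incident to one of these. Let $b'\in B$ be the associated vector ($w$ lies in $b'$'s vector tree or tail, or on an incident path; for a $B$-shortcut-tree target take $b'$ arbitrary). As $b'$ is not the all-zeroes vector, choose $i$ with $b'[i]=1$, and by the standing assumption choose $a'\in A$ with $a'[i]=1$. Now route $u$ to $a'_r$ (cost $\le 2\lg n + p$), then down $a'$'s vector tree to its coordinate-$i$ leaf, along the length-$p$ path into the $c_{i,A}$-tree, and up to $c_i$ (cost $\le \lg d + p + \lg n$), then down the $c_{i,B}$-tree, along the length-$p$ path to the coordinate-$i$ leaf of $b'$'s vector tree, and up to $b'_r$ (cost $\le \lg n + p + \lg d$), and finally along the length-$p$ tail to $b'_p$ --- or, for a $B$-shortcut-tree target, along a length-$p$ path to a leaf of the $B$-shortcut tree and on to $w$. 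This uses exactly four length-$p$ segments plus $O(1)$ tree traversals, so $d(u,w)\le 4p + O(\lg n + \lg c)$; counting each traversal as at most $\lceil\lg n\rceil$ or $\lceil\lg d\rceil$ yields the claimed $4p + 4\lg n + 2\lg c$.

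The real content is the routing in the previous paragraph. One cannot afford to reach the $B$ side by first walking to the root of the $B$-shortcut tree, since that already costs roughly $4p$, and a further $2p$ to descend to some $b_p$ would push past $6p$ and destroy the gap with Claim~\ref{claim:orthogonal}. The crossing must instead pass through a single coordinate node $c_i$ shared by $a'$ and $b'$, which is exactly why the normalization ``every coordinate is used on both sides'' is needed; the one thing to verify carefully is that this normalization is free and that such an $a'$ exists for every target $b'$ (in particular even when $b'$ belongs to an orthogonal pair, since $b'\ne\vec{0}$ still has a $1$-coordinate hit by some $a'$).
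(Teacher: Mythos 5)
Your proof is correct, and it is worth noting that the paper never actually gives a proof of Claim~\ref{claim:shortcut}: the appendix only proves Claims~\ref{claim:notorthogonal}--\ref{claim:cnode}, so the closest comparison point is the paper's proof of Claim~\ref{claim:cnode}, which your routing argument mirrors (cross from the $A$-side to the $B$-side through a single coordinate node $c_i$ shared by $a'$ and the target's vector $b'$, paying exactly four length-$p$ segments plus tree traversals). The genuinely valuable addition in your write-up is the explicit normalization that every coordinate is used on both sides and every vector is nonzero. This is not a cosmetic convenience: without it the claim as literally stated can fail. If some $b'\in B$ is orthogonal to every $a\in A$ (but $b'\neq\vec{0}$), then every route from the $A$-shortcut tree to $b'_p$ must first cross to the $B$-side through some non-orthogonal pair and then continue via the $B$-shortcut tree or a second coordinate node, giving $d(u,b'_p)\approx 6p$, so $\epsilon(u)$ exceeds $4p+O(\lg n+\lg d)$. (The overall reduction survives because in that situation an orthogonal pair exists and only the gap between $(4+o(1))p$ and $6p$ is needed, but the unconditional claim does not.) The paper's proof of Claim~\ref{claim:cnode} makes the same assumption implicitly when it posits ``a vector $a$ for which there is a length $p$ path from $a$'s vector tree to $c_A$,'' so your argument is the honest version of what the paper intends. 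Two small imprecisions, neither harmful: nodes on paths incident to $c_B$-trees are within $3p+O(\lg n+\lg d)$ of $u$ (not $2p+O(\cdot)$) via the route you describe, and targets inside the $B$-shortcut tree come out to $4p+6\lg n+2\lg d$ rather than the stated $4p+4\lg n+2\lg c$; since $p=\omega(\log n)$ only the $4p+O(\lg n+\lg d)$ form matters, and the paper's own proof of Claim~\ref{claim:cnode} is loose in exactly the same way.
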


The following lemma now follows directly from Claims~\ref{claim:notorthogonal}
through~\ref{claim:shortcut} above and by picking $p =\omega(\log n)$
sufficiently large.
\begin{lemma}
    The $OV_{dia}$-graph has diameter $(6+o(1))p$ if an orthogonal pair exists,
    and $(4+o(1))p$ if no orthogonal pair exists.
\end{lemma}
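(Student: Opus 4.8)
The plan is to show that the two regimes—orthogonal pair exists vs.\ does not—yield diameter exactly $(6+o(1))p$ and $(4+o(1))p$ respectively, by combining the five distance claims in a case analysis over pairs of nodes. First I would fix $p = \omega(\log n)$ large enough that the additive terms $2\lg n + 2\lg d$ and $4\lg n + 2\lg c$ are all $o(p)$; this is the choice that converts the sharp bounds of Claims~\ref{claim:notorthogonal}--\ref{claim:shortcut} into the clean $(6+o(1))p$ / $(4+o(1))p$ statement. Since $c = d = \omega(\log n)$ but still $n^{o(1)}$ (as $d = \omega(\log n)$ in the OV instances and WLOG $d = \mathrm{poly}\log n$, or in any case $d \le n$), the error terms are $O(\log n) = o(p)$.

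Next, the lower bound. If an orthogonal pair $a \in A$, $b \in B$ exists, Claim~\ref{claim:orthogonal} gives $d(a_p, b_p) \ge 6p$, so the diameter is at least $6p$. For the matching upper bound I would argue that \emph{every} pair of nodes is within $(6+o(1))p$: the farthest any node can be from the rest of the graph is controlled by walking to the nearest $a_p$ or $b_p$ endpoint (at distance $O(p + \log n)$ via the vector tree and the shortcut trees), then using that the $a_p$--$b_p$ distances are at most $6p$ in the orthogonal case; nodes lying in the interior of a connecting path of length $p$ add at most another $p$, and the claims about $\epsilon(u)$ for $c$-tree and shortcut-tree nodes (Claims~\ref{claim:cnode},~\ref{claim:shortcut}) already bound those at $4p + o(p) \le 6p + o(p)$. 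So the diameter is $(6+o(1))p$.

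For the no-orthogonal-pair case, the upper bound is the heart of it: I would show that for \emph{any} two nodes $u, v$, $d(u,v) \le 4p + o(p)$. Partition the nodes by location—(i) both in vector trees, (ii) at least one in a $c_A$/$c_B$ tree, (iii) at least one in a shortcut tree, (iv) on the interior of a connecting path, (v) on the terminal $a_p$/$b_p$ stub. Claims~\ref{claim:notorthogonal},~\ref{claim:vectortrees},~\ref{claim:cnode},~\ref{claim:shortcut} directly handle the hardest subcases (two $a_p/b_p$ endpoints, two vector-tree nodes, and any node whose eccentricity is already bounded), and the remaining cases reduce to these by a short detour of length $\le p$ to the nearest root $a_r$ (resp.\ $b_r$) or endpoint, which gets absorbed into the $o(p)$ slack or matched against the $4p$ bound. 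The main obstacle I expect is \emph{case (v)}: bounding distances between two terminal stub-ends $a_p$ and $b_p$ when no orthogonal pair exists—this is exactly Claim~\ref{claim:notorthogonal}, so the work is really in making sure that no \emph{other} pair of nodes secretly exceeds $4p + o(p)$, in particular that a node in the middle of a length-$p$ connecting path between a vector tree and a $c$-tree does not combine with a far-away stub-end to exceed $4p$; here one checks that such a midpoint is within $p/2 + o(p)$ of a root $a_r$ and within $p + o(p)$ of the relevant $c$, and then routes through the $c$-trees using Claim~\ref{claim:cnode}. Assembling these bounds and taking the maximum over all cases gives diameter $(4+o(1))p$, completing the proof.
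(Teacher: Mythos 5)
Your overall route is the same as the paper's: the paper proves this lemma in a single line by citing Claims~\ref{claim:notorthogonal}--\ref{claim:shortcut} and taking $p=\omega(\log n)$ sufficiently large, and your plan is exactly that argument with the case analysis over node locations written out. In fact you go further than the paper by noticing that the claims do not literally cover nodes in the interiors of the length-$p$ connecting paths, which is the right thing to worry about.

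However, two steps do not work as you state them. First, in the no-orthogonal-pair case, a detour of length up to $p$ cannot be ``absorbed into the $o(p)$ slack,'' and routing a connecting-path node through the adjacent $c$-tree and invoking Claim~\ref{claim:cnode} gives only $(p-t)+4p+o(p)$, i.e.\ up to $(5+o(1))p$, which overshoots the claimed $(4+o(1))p$; also, only the exact midpoint is within $p/2+o(p)$ of a root, while a general interior node can be at distance about $p$ from it. The clean patch is to route through the adjacent root instead: every node of the graph (interior path nodes included) is within $p+\lg n+\lg d$ of some root $a_r$ or $b_r$, and when no orthogonal pair exists every root satisfies $d(a_r,v)\le 3p+o(p)$ for all $v$ (this follows from the route underlying Claim~\ref{claim:notorthogonal} with the length-$p$ stub removed, together with the same routes to tree and path nodes), so every pair is within $(4+o(1))p$. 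Second, in the orthogonal case you invoke an upper bound ``$a_p$--$b_p$ distances are at most $6p$,'' but Claim~\ref{claim:orthogonal} is only a lower bound and no claim supplies this; it is not automatic, since if the $A$- and $B$-sides shared no coordinate at all the graph would be disconnected. It does hold via the route $a_p\to a_r\to c_i$ (some $i$ with $a[i]=1$) $\to$ some $b'$ with $b'[i]=1\to$ $B$-shortcut tree $\to b_r\to b_p$, using the paper's assumption that no vector is all zeros; alternatively, note that only the direction ``diameter $\ge 6p$ when an orthogonal pair exists'' is needed for Theorem~\ref{thm:diameter}, so you could simply weaken that half of the statement rather than assert an unproved upper bound.
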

Using this Lemma we can now prove Theorem~\ref{thm:diameter}: Assume that there
exists an algorithm with running time $O(n^{2-\eps})$ for computing a
$(3/2-\delta)$-approximation of the diameter of a constant degree graph for
some $\eps,\delta > 0$. We then create the $OV_{dia}$-graph for an instance of
the OV problem, and run our algorithm. If no orthogonal pair exists the
algorithm will return at most $(6-\delta' + o(1))p \ll 6p$ for some fixed
$\delta' > 0$ and we can thus correctly decide whether an orthogonal vector
pair exists. Since the graph has $\Ot(n)$ nodes and edges and can be
constructed in $\Ot(n)$ time, this implies a $O(n^{2-\eps})$ algorithm for the
OV problem contradicting Conjecture~\ref{conj:ov}.

\subsection{Radius in constant degree graphs}
In order to prove Theorem~\ref{thm:radius} we will use a similar approach as in
the previous section, however we will be reducing from the HS problem of
Conjecture~\ref{conj:hs}. We will use the following graph, $OV_{rad}$:

Take two copies of the $OV_{dia}$-graph defined in the previous section, $G_1$
and $G_2$ and glue the graphs together in the nodes nodes $a_p$, such that
$G_2$ is a ``mirrored'' $G_1$ along the nodes $a_p$. This is illustrated in
figure \ref{fig:radius} below.
\begin{figure}[htbp]
    \centering
    \includegraphics[width=1.0\textwidth]{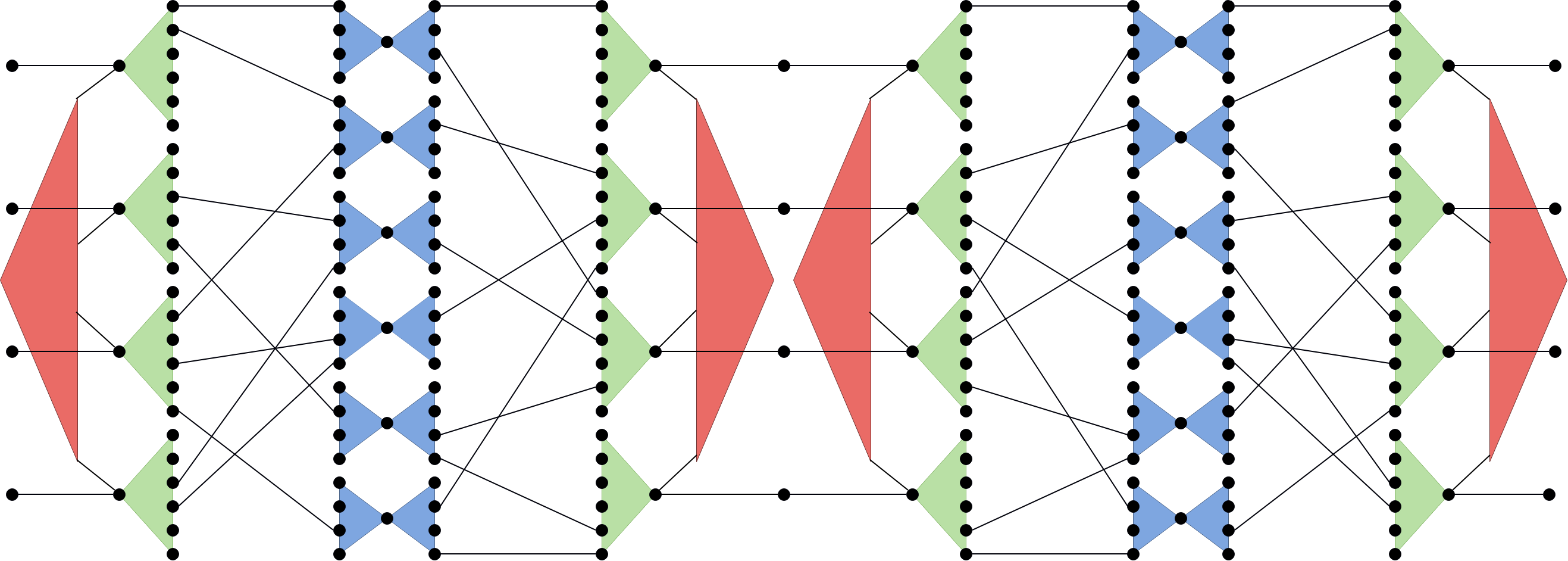}
    \caption{The $OV_{rad}$-graph, as in figure \ref{fig:diameter} all lines are paths of length $p$.}
    \label{fig:radius}
\end{figure}
Note that the minumum eccentricity must be in a node $a_p$ since all paths from
a node $u \in G_1$ to $v \in G_2$ must pass through a node $a_p$.

We now set $p = \omega(\log n)$ sufficiently large and get the following lemma:
\begin{lemma}
    A node $a_p$ in $OV_{rad}$ has $\epsilon(a_p) = (4+o(1))p$ if $a$ is a
    hitting set and $\epsilon(a_p) = (6+o(1)p)$ if $a$ is not a hitting set.
\end{lemma}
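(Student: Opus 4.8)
The plan is to analyze the eccentricity of a node $a_p$ in $OV_{rad}$ by combining the distance claims already established for the $OV_{dia}$-graph (Claims~\ref{claim:notorthogonal}--\ref{claim:shortcut}) with the observation that $OV_{rad}$ is two mirrored copies $G_1,G_2$ glued at the $a_p$ nodes. Fix a vector $a\in A$ and the corresponding node $a_p$, which is identified across both copies. Any node $v$ in the whole graph lies in $G_1$ or $G_2$, and a shortest path from $a_p$ to $v$ stays inside that copy (since the only way to cross between copies is through some node $a'_p$, and in $G_1$ alone $a_p$ already reaches every node of $G_1$). So $\epsilon_{OV_{rad}}(a_p)=\max_{i\in\{1,2\}}\epsilon_{G_i}(a_p)=\epsilon_{G_1}(a_p)$ by symmetry. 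Hence it suffices to bound $d_{G_1}(a_p,v)$ for every node $v$ of a single $OV_{dia}$-graph.

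First I would handle the ``internal'' nodes $v$ — those in $c_A/c_B$-trees, shortcut trees, vector trees, and the connecting paths of length $p$. From the structure of $OV_{dia}$ one checks that $a_p$ is within roughly $3p+O(\log n)$ of its own vector root $a_r$ plus anything reachable from there, and within roughly $4p+O(\log n)$ of everything else: a $c$-node is reached via $a_r$ and one connecting path, a node in the $B$-side via $a_r$, the $A$-shortcut tree root, the $B$-shortcut tree, and a $b_r$ — each leg costing at most $p+O(\log n)$, for a total of $(4+o(1))p$; the shortcut-tree and $c$-tree bounds are exactly Claims~\ref{claim:cnode} and~\ref{claim:shortcut} viewed from $a_p$ (at distance $\le 2p+O(\log n)$ from such nodes). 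So all internal nodes contribute at most $(4+o(1))p$ regardless of the HS status of $a$, and likewise all nodes $b_p$ with the ``short'' route available (i.e.\ $a$ not orthogonal to that $b$) are at distance $\le 4p+2\lg n+2\lg d=(4+o(1))p$ by Claim~\ref{claim:notorthogonal}. The only nodes whose distance from $a_p$ can exceed $(4+o(1))p$ are the nodes $b_p$ in $G_1$ for which $a\perp b$; by Claim~\ref{claim:orthogonal} each such pair has $d(a_p,b_p)\ge 6p$, and a matching upper bound of $(6+o(1))p$ comes from routing through the shortcut trees (the detour $a_p\!-\!a_r\!-\!A\text{-shortcut}\!-\!B\text{-shortcut}\!-\!b_r\!-\!b_p$ uses five paths of length $p$ — wait, four path-legs plus one more; in any case $(6+o(1))p$).

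Then I would translate this into the two cases. If $a$ is a hitting set, then $a$ is not orthogonal to any $b\in B$, so every node $b_p$ is at distance $(4+o(1))p$ by Claim~\ref{claim:notorthogonal}, and combined with the internal-node bounds above we get $\epsilon_{G_1}(a_p)=(4+o(1))p$, hence $\epsilon_{OV_{rad}}(a_p)=(4+o(1))p$. If $a$ is not a hitting set, there is some $b\in B$ with $a\perp b$, and then $d(a_p,b_p)\ge 6p$ forces $\epsilon_{G_1}(a_p)\ge 6p$; the upper bound $(6+o(1))p$ follows since no node is ever farther than that. Choosing $p=\omega(\log n)$ makes the $2\lg n+2\lg d$ terms $o(p)$, absorbing them into the $o(1)$ factors and giving the clean $(4+o(1))p$ versus $(6+o(1))p$ dichotomy.

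The main obstacle is the claim that a shortest path from $a_p$ never needs to leave the copy it starts routing into — i.e.\ that gluing at the $a_p$ nodes cannot create a shorter path by passing through a \emph{different} glued node $a'_p$ and into the other copy. I expect this to follow from the fact already stated in the text (``all paths from a node $u\in G_1$ to $v\in G_2$ must pass through a node $a_p$''): any cross-copy route from $a_p$ to $v\in G_2$ would first go from $a_p$ to some $a'_p$ \emph{within $G_1$}, but that prefix is already $\ge$ the within-$G_2$ distance it is trying to beat by symmetry, so it never helps — one just has to spell out this exchange argument and check it against the worst case ($v=b_p$ with $a\perp b$), where the diameter bound $(6+o(1))p$ of $OV_{dia}$ already caps everything.
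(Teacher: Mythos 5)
Your overall approach is the same as the paper's: by the mirror symmetry of $OV_{rad}$ it suffices to bound the eccentricity of $a_p$ inside one copy $G_1$; the hitting-set case then follows from Claim~\ref{claim:notorthogonal} (together with routine bounds for the internal tree and path nodes, which you spell out and the paper leaves implicit), and the non-hitting-set case follows from the lower bound of Claim~\ref{claim:orthogonal}. The one concrete flaw is your upper-bound route in the orthogonal case: the detour $a_p$--$a_r$--$A$-shortcut tree--$B$-shortcut tree--$b_r$--$b_p$ does not exist, because the $A$- and $B$-shortcut trees are not connected to each other in $OV_{dia}$ or $OV_{rad}$ (such a connecting path is only added later, in the reach-centrality construction). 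Crossing from the $A$-side to the $B$-side must go through some $c_A$/$c_B$ tree pair, which requires a non-orthogonal pair somewhere along the route; consequently the clean $(6+o(1))p$ upper bound is not automatic, and in a pathological instance (say $a$ orthogonal to every $b\in B$ while some $b$ is orthogonal to every vector of $A$) the distance $d(a_p,b_p)$ is about $8p$. This does not damage the reduction: as in the paper's own proof, which establishes only $\epsilon(a_p)\le (4+o(1))p$ in the hitting-set case and $\epsilon(a_p)\ge 6p$ otherwise, those are the only bounds needed for Theorem~\ref{thm:radius}, and the parts of your argument proving them are sound.
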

\begin{proof}
    Since $G_2$ is a copy of $G_1$ it is enough to examine the eccentricities
    of $a_p$ in $G_1$. Assume that $a$ forms a hitting set (that is $a$ is
    not orthogonal to any $b \in B$). We then have from
    Claim~\ref{claim:notorthogonal} that $\epsilon(a_p) = 4p + O(\lg n)$.

    Assume that $a$ does not form a hitting set, since there then exists a $b
    \in B$ which is orthogonal to $a$, we have from
    Claim~\ref{claim:orthogonal} that $\epsilon(a_p) \geq 6p$.
\end{proof}
Now Theorem~\ref{thm:radius} follows from the above lemma in the same way that
Theorem~\ref{thm:diameter} followed from the discussion in the previous
section.

\subsection{Computing all eccentricities}
For Theorem~\ref{thm:eccentricities} we consider the graph from the proof of
Theorem~\ref{thm:diameter}. From the discussion in Section~\ref{sec:diameter}
we see that some node $a_r$ must have $\epsilon(a_r) = (5+o(1))p$ if there
exists an orthogonal vector pair and that all nodes $a_r$ have $\epsilon(a_r) =
(3+o(1))p$ if no such pair exists. Thus, it follows that if an algorithm can
compute a $(5/3-\delta)$ approximation of all eccentricities in $O(n^{2-\eps})$
time for any $\eps,\delta > 0$ we have a contradiction to
Conjecture~\ref{conj:ov}.

\section{Hardness of centrality problems}
In this section we present Theorems~\ref{thm:rc} and~\ref{thm:bc}.

\subsection{Reach centrality}
For reach centrality we will employ a slightly modified version
of the diameter graph from Section~\ref{sec:diameter}. The goal will be to
distinguish between a RC of $(3+o(1))p$ and $(\frac{3}{2}+o(1))p$, thus ruling
out a $(2-\delta)$-approximation for any $\delta> 0$.

Consider the $OV_{dia}$ graph constructed in the proof of
Theorem~\ref{thm:diameter}. We create a slightly modified graph, where the roots
of the $A$ and $B$-shortcut trees are connected by a path of length $2(p-\log
n)$. We denote the middle node of this path by $u$. We will now consider the
lengths of shortest paths passing through $u$. This is illustrated below in
Figure~\ref{fig:reach}

\begin{figure}[htbp]
    \centering
    \includegraphics[width=0.7\textwidth]{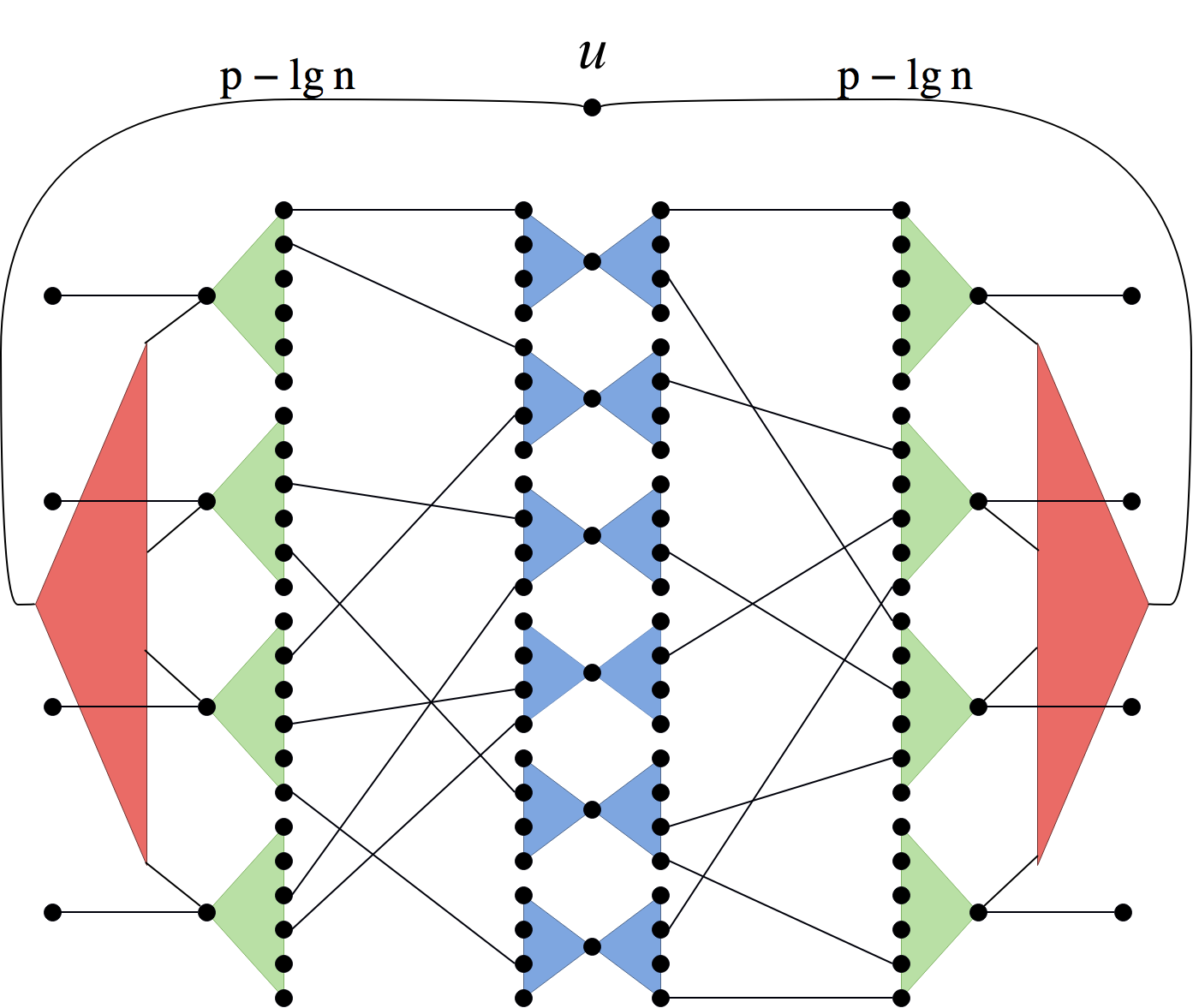}
    \caption{The modified $OV_{dia}$-graph for reach centrality}
    \label{fig:reach}
\end{figure}

\begin{claim}
    Assume that no two vectors $a\in A$ and $b\in B$, from which the modified
    $OV_{dia}$ graph was created, are orthogonal. Then $RC(u) =
    (\frac{3}{2}+o(1))p$.
\end{claim}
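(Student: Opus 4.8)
The plan is to bound $RC(u)$ from both sides when no orthogonal pair exists. Recall $u$ is the midpoint of the new length-$2(p-\log n)$ path joining the roots of the $A$-shortcut and $B$-shortcut trees, so $u$ sits at distance exactly $p-\log n$ from each of those two roots. First I would establish that in the modified $OV_{dia}$-graph, the new $u$-path does not create any shortcut that changes the relevant distances established in Claims~\ref{claim:notorthogonal}--\ref{claim:shortcut}: going through $u$ costs $2(p-\log n)$ to travel between the two shortcut-tree roots, whereas the old route through any $a_p$ (or via vector trees) is cheaper or comparable, so distances between ``interesting'' pairs like $(a_p,b_p)$ are essentially unchanged, remaining $\le 4p+O(\log n)$ by Claim~\ref{claim:notorthogonal}.

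\textbf{Upper bound on $RC(u)$.}
For the upper bound I would argue that for every pair $s,t$ with $u$ on a shortest $s$--$t$ path, $\min(d(s,u),d(t,u)) \le (\tfrac32+o(1))p$. The key observation is that $u$ lies on the unique ``bridge'' path of length $2(p-\log n)$, and the only shortest paths through $u$ are those that genuinely need to cross from the $A$-side to the $B$-side of the graph via this bridge rather than via some $a_p$. Any vertex $s$ has distance at most $(4+o(1))p$ to everything (the diameter bound), but to be on a shortest path through $u$, the vertex closer to $u$ must be reachable from $u$ staying on the bridge and then into its side; I would show the nearer endpoint is within the $A$-shortcut tree's ``reach'' from $u$, namely at most $(p-\log n) + (\text{depth of shortcut tree} + p + O(\log n))$-ish, and track constants carefully to land at $(\tfrac32+o(1))p$. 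Concretely, a vertex $s$ on the far side at distance $> (\tfrac32+o(1))p$ from $u$ forces $t$ on the near side at distance $\le (\tfrac32+o(1))p$, giving the $\min$.

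\textbf{Lower bound on $RC(u)$ and the main obstacle.}
For the matching lower bound I would exhibit a concrete pair $s,t$ — natural candidates are $a_p$ and $b_p$ for suitable vectors, or the two shortcut-tree roots themselves — such that $d(s,t)=d(s,u)+d(u,t)$ and $\min(d(s,u),d(u,t)) = (\tfrac32+o(1))p$. Taking $s=a_r$-side and $t=b_r$-side vertices that are ``deep'' (far from their shortcut roots), the distance from such a vertex to $u$ is roughly $p + (p-\log n) + O(\log n) = (2+o(1))p$ on one side, but I actually want the balanced case; so I would instead pick one endpoint closer and one farther, or pick endpoints symmetric about $u$ so that each is at distance $(\tfrac32+o(1))p$. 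Since no orthogonal pair exists, every $d(a_p,b_p)\le 4p+O(\log n)=(4+o(1))p$ and I must verify such a shortest path can be routed through $u$ (total $2\cdot(\tfrac32+o(1))p = 3p+o(p)$, which is $\le (4+o(1))p$, so routing through $u$ would not be shortest!) — this is the crux: I expect the main obstacle is that the bridge through $u$ is \emph{not} on a shortest path between the far endpoints when an orthogonal pair is absent, so the lower bound must instead come from pairs whose only connection passing near-optimally through $u$ is the bridge itself, e.g. a vertex in the deep part of the $A$-shortcut subtree paired with its mirror in the $B$-shortcut subtree, for which the via-$a_p$ alternative is genuinely longer. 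I would carefully identify such witness pairs, confirm the shortest path is forced onto the bridge (using that all cross-side routes other than the bridge go through some $a_p$ and are strictly longer for these particular endpoints), and compute $\min(d(s,u),d(u,t))$ exactly, which should be $(\tfrac32+o(1))p$ once $p=\omega(\log n)$ absorbs the logarithmic terms.
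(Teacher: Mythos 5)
Your proposal has a genuine gap in the direction that actually matters. The heart of the paper's argument is the upper bound $RC(u)\le(\tfrac32+o(1))p$, and it hinges on the following observation that your sketch never makes: every node of the two shortcut trees is within roughly $p$ of $u$, so if a pair $v,w$ is to certify $RC(u)>(\tfrac32+\delta)p$, \emph{both} $v$ and $w$ must lie more than $(\tfrac12+\delta)p$ beyond the shortcut trees, i.e.\ within $(\tfrac12-\delta)p$ of some $a_r$ and some $b_r$ (or deeper, in the vector trees or pendant $a_p$-paths). For such a pair the route through $u$ costs at least $(3+2\delta)p$, whereas — and this is exactly where the no-orthogonal-pair hypothesis enters — the route through a common neighbour $c_i$ costs only $(3-2\delta)p+o(p)$, so $u$ is not on any shortest $v$--$w$ path and no such certifying pair exists. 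Your upper-bound paragraph instead asserts without justification that ``a vertex $s$ at distance $>(\tfrac32+o(1))p$ from $u$ forces $t$ on the near side at distance $\le(\tfrac32+o(1))p$''; that assertion is false in general (if an orthogonal pair existed, $a_p$ and $b_p$ are both at distance about $3p$ from $u$ and their shortest path does go through $u$), so the argument must use non-orthogonality precisely at this point, and your sketch does not. The constant you do compute for the ``near side'' ($(p-\log n)+\log n+p$) is $2p$, not $\tfrac32 p$, so the bookkeeping as proposed does not land at the claimed bound.

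There are also two structural misreadings of the graph. First, you repeatedly treat the $a_p$ nodes as the alternative $A$-to-$B$ crossing (``all cross-side routes other than the bridge go through some $a_p$''); in the $OV_{dia}$-graph the $a_p$'s are endpoints of pendant paths hanging off $a_r$, and the cross-side connections go through the $c_A$/$c_B$ trees and the nodes $c_i$ (this is what Claims~\ref{claim:notorthogonal} and~\ref{claim:orthogonal} are about). Second, your proposed lower-bound witnesses — a node deep in the $A$-shortcut tree paired with its mirror in the $B$-shortcut tree — are at distance at most about $p$ from $u$, so they can never certify reach $\tfrac32 p$. The correct witnesses are nodes on the length-$p$ paths between $a_r$ (resp.\ $b_r$) and the shortcut trees, at distance just under $p/2$ from the shortcut-tree leaf: for those the bridge route through $u$ (length $3p$) beats the $c_i$-route (length $3p+\Theta(\log n+\log d)$), and each endpoint is at distance just under $\tfrac32 p$ from $u$. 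The paper's proof concentrates on the upper bound and leaves this witness implicit, but your sketch as written supplies neither direction correctly.
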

\begin{proof}
    First note, that the distance from $u$ to any node in one of the shortcut
    tree is at most $p$, thus for $RC(u)$ to be $(\frac{3}{2}+\delta)p$
    there must exist nodes $v,w$ with shortest path going through $u$, which
    are at distance at least $(\frac{1}{2}+\delta)p$ from the shortcut trees.
    Consider some node $v$ one the path from some $a_r$ to the $A$-shortcut
    tree at distance $(\frac{1}{2}-\delta)p$ from $a_r$ and some node $w$ on
    the path from some $b_r$ to the $B$-shortcut tree at distance
    $(\frac{1}{2}-\delta)p$ from $b_r$. Since we assumed that no two vectors of
    the OV problem are orthogonal, it follows that $d(v,w) = (3-2\delta)p
    +o(p)$ by going through some node $c_i$, and the shortest path from $v$ to
    $w$ thus cannot pass through $u$. It is easy to see that any candidate
    shortest path for $RC(u)$ has to pass through such a pair of nodes $v,w$
    the claim follows.
\end{proof}
If we now assume that there exists a pair of vectors $a\in A$ and $b\in B$ that
are orthogonal, we know that $d(a_p,b_p) = (6+o(1))p$. We will consider going
from $a_p$ to $b_p$ and show that $RC(u) = (3+o(1))p$.
\begin{claim}
    Assume that there exists $a\in A$ and $b\in B$ that are orthogonal. Then
    $RC(u) = (3+o(1))p$.
\end{claim}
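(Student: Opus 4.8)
The plan is to exhibit a pair of nodes $v,w$ whose shortest path passes through $u$ and for which $\min(d(v,u),d(u,w)) = (3+o(1))p$, and then to argue that no pair can do better. For the lower bound, take the orthogonal pair $a\in A$, $b\in B$ and consider $v = a_p$ and $w = b_p$. First I would check that a shortest path from $a_p$ to $b_p$ actually goes through $u$: since $a$ and $b$ are orthogonal, Claim~\ref{claim:orthogonal} tells us $d(a_p,b_p) \geq 6p$, so the route through the $c$-nodes is blocked; on the other hand there is a path $a_p \rightsquigarrow a_r \rightsquigarrow (\text{$A$-shortcut tree}) \rightsquigarrow u \rightsquigarrow (\text{$B$-shortcut tree}) \rightsquigarrow b_r \rightsquigarrow b_p$ of length $p + (p - \lg n) + (p-\lg n)$ on each side, i.e. total $(6+o(1))p$, matching the lower bound, so this path is shortest and it passes through $u$. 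Moreover $d(a_p,u) = d(u,b_p) = (3+o(1))p$, so $\min(d(a_p,u), d(u,b_p)) = (3+o(1))p$, giving $RC(u) \geq (3+o(1))p$.

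For the upper bound I would argue $RC(u) \leq (3+o(1))p$: for any $s,t$ with a shortest path through $u$, we need $\min(d(s,u),d(u,t)) \leq (3+o(1))p$. The key structural fact is that every shortest path through $u$ enters $u$ from the $A$-side (through the $A$-shortcut tree) and leaves on the $B$-side (through the $B$-shortcut tree), since $u$ is the midpoint of the path joining the two shortcut-tree roots and that path is the only way through $u$. So $s$ lies on the $A$-side of the graph and $t$ on the $B$-side. I would then bound $d(s,u)$ for $s$ on the $A$-side: the $A$-side of the $OV_{dia}$-graph has radius roughly $(3+o(1))p$ as measured from the $A$-shortcut tree root — any $A$-side node ($a_p$, a vector-tree node, a node on an $a_r$-to-$c$ path, a node in a $c_A$-tree) is within $(3+o(1))p$ of the root of the $A$-shortcut tree, hence within $(3+o(1))p$ of $u$ plus the half-path length $p - \lg n$; wait — more carefully, $d(s,u) \leq d(s, r_A) + d(r_A, u) = d(s,r_A) + (p-\lg n)$, and I'd need $d(s,r_A) \leq (2+o(1))p$ for all $A$-side $s$. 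This is exactly what Claims~\ref{claim:vectortrees}–\ref{claim:shortcut} (and the $\epsilon$-bounds therein) give, up to $O(\lg n)$ terms. The same holds symmetrically on the $B$-side for $t$. Thus $\min(d(s,u),d(u,t)) \leq (3+o(1))p$ for every candidate pair, so $RC(u) \leq (3+o(1))p$.

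Combining the two bounds gives $RC(u) = (3+o(1))p$. The main obstacle I anticipate is the upper bound — specifically, verifying cleanly that \emph{every} node $s$ on the $A$-side (including nodes sitting in the middle of the length-$p$ paths that replace the original OV-edges, which is exactly the subtlety the authors flag in their "challenges" section) is within $(2+o(1))p$ of $r_A$, so that no exotic choice of $s,t$ pushes $\min(d(s,u),d(u,t))$ above $(3+o(1))p$. This requires a careful case analysis over the node types in $OV_{dia}$ and leaning on the already-proven eccentricity claims, but no new ideas; the $O(\lg n)$ slack is absorbed into the $o(1)$ by taking $p = \omega(\log n)$.
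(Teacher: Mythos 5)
Your proposal follows essentially the same route as the paper: the lower bound comes from the orthogonal pair via $a_p$ and $b_p$, whose shortest path is forced through the shortcut trees and hence through $u$, with $u$ as its midpoint, and the upper bound comes from the fact that every node of the modified graph is within $(2+o(1))p$ of one of the two shortcut-tree roots, so $\min\!\left(d(s,u),d(u,t)\right)\le (3+o(1))p$ for every candidate pair. Your upper-bound sketch is in fact more explicit than the paper's, which dismisses that direction with ``it is not hard to see''.

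One point needs tightening. Your per-side length count $p+(p-\lg n)+(p-\lg n)$ is miscounted: the route $a_p\to a_r\to$ shortcut-tree leaf $\to$ shortcut-tree root $\to u$ has length $p+p+\lg n+(p-\lg n)=3p$ exactly, so the through-$u$ path has length exactly $6p$, and this exactness matters. Because $RC(u)$ only counts pairs $s,t$ with $d(s,t)=d(s,u)+d(u,t)$, arguing that the through-$u$ path ``matches the lower bound up to $o(1)$'' does not suffice to conclude that $u$ lies on a shortest $a_p$--$b_p$ path; one needs the exact equality $d(a_p,u)+d(u,b_p)=6p=d(a_p,b_p)$ (combining the exact computation with Claim~\ref{claim:orthogonal}), which is precisely why the connecting path between the shortcut-tree roots is given length $2(p-\log n)$ rather than $2p$. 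With that arithmetic corrected your argument coincides with the paper's, which computes $2(p+p+\log n+p-\log n)=6p$ and then invokes the diameter analysis to certify that no shorter $a_p$--$b_p$ path exists.
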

\begin{proof}
    We consider the path from $a_p$ to $b_p$ going through $u$. This path has
    length $2(p + p + \log n + p - \log n) = 6p$. Observe also that any other
    path between $a_p$ and $b_p$ has length at least $6p$ by the proof of
    Theorem~\ref{thm:diameter}, thus this is a shortest path from $a_p$ to
    $b_p$ passing through $u$, and it follows that $RC(u)$ is at least $3p$. It
    is not hard to see that this is also an upper bound on $RC(u)$.
\end{proof}
It now follows that any algorithm that can compute a $(2-\delta)$-approximation
of $RC(u)$ in bounded degree graphs can distinguish between the two cases. Thus
any algorithm that can do this in time $O(n^{2-\eps})$ for any $\eps > 0$ is a
contradiction to Conjecture~\ref{conj:ov}.

\subsection{Betweenness centrality}
We will start by showing Theorem~\ref{thm:bc} for sparse graphs instead of
constant degree graphs, and then change the construction to work also in this
case. We will base our reduction on the $OV$-graph and modify it as follows:
\begin{itemize}
    \item Add two nodes $x$ and $y$, and an edge $(x, y)$.
    \item Add an edge $(a, x)$ for each $a \in A$, and edge $(b, y)$ for each
        $b \in B$
    \item Create two copies $G_1$ and $G_2$ of this graph. Denote node $x \in
        G_1$ by $x_1$ and $x \in G_2$ by $x_2$.
    \item Delete all nodes from $B$ in $G_1$
\end{itemize}

Having created the graphs $G_1$ and $G_2$ we now query the betweenness
centrality of $x_1$ in $G_1$ and of $x_2$ in $G_2$. By comparing the two values
we will be able to determine the answer to the OV problem instance.
\begin{lemma}
    $BC(x_2) > BC(x_1)$ if and only if there is an orthogonal pair $a \in A$,
    $b \in B$.
\end{lemma}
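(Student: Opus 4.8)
The plan is to analyze $BC(x_1)$ in $G_1$ and $BC(x_2)$ in $G_2$ separately, express each as a sum of contributions from pairs of nodes whose shortest paths pass through the relevant copy of $x$, and argue that these sums are identical except for one carefully isolated collection of terms that is nonzero precisely when an orthogonal pair exists. Crucially, since $G_1$ contains no $B$-vectors at all, every structure in $G_1$ (the $A$-nodes, the $C$-nodes, $x_1$, $y_1$) is duplicated identically inside $G_2$; the only extra nodes in $G_2$ are the $B$-nodes. So $BC(x_2) - BC(x_1)$ should reduce to exactly the sum, over pairs $(s,t)$ with at least one of $s,t$ being a $B$-node, of $\sigma_{s,t}(x_2)/\sigma_{s,t}$ in $G_2$ — plus any change to the "common" terms caused by the presence of $B$-nodes creating new shortest paths. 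The first thing I would do is argue that adding the $B$-nodes does \emph{not} shorten or reroute any shortest path between two non-$B$ nodes (a path between two $A$-nodes, or between an $A$-node and $y$, already has length $2$ or so via $x$ or via $C$, and going out to a $B$-node and back cannot help), so the common terms are genuinely unchanged and the difference is purely the $B$-involving terms.

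Next I would enumerate which pairs $(s,t)$ involving a $B$-node have a shortest path through $x_2$. A shortest path from $b$ to anything on the "$A$ side" that uses $x_2$ must go $b \to y_2 \to x_2 \to \cdots$. The key point is the pair $(b, a)$ for $a \in A$, $b \in B$: the path $b - y_2 - x_2 - a$ has length $3$ and passes through $x_2$, while the alternative path $b - c_i - a$ through a common coordinate has length $2$. Thus $d(a,b) = 2$ and the $x_2$-path is \emph{not} shortest whenever $a,b$ share a coordinate (are non-orthogonal); but when $a,b$ are orthogonal, $d(a,b) = 3$ and the unique shortest path $b - y_2 - x_2 - a$ goes through $x_2$, contributing $1$ to $BC(x_2)$. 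Every other $B$-involving pair — $(b, c_i)$, $(b, y_2)$, $(b, x_2)$ itself (excluded), $(b, b')$ for two $B$-nodes — either has no shortest path through $x_2$ or its contribution is the same regardless of orthogonality (and in fact I'd argue these contribute $0$: $d(b, c_i) = 1$ directly, $d(b, b') = 2$ via $y_2$ not through $x_2$, $d(b, y_2) = 1$). So the entire difference $BC(x_2) - BC(x_1)$ equals the number of orthogonal pairs $(a,b)$, which is positive iff an orthogonal pair exists. This gives both directions of the "if and only if" at once.

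The main obstacle I anticipate is the bookkeeping around $\sigma_{s,t}$ when $d(a,b) = 2$: there I must be careful that the length-$3$ path through $x_2$ contributes $0$ to $\sigma_{a,b}(x_2)$ (it simply isn't a shortest path, so it's fine), and also confirm that for the orthogonal pairs the path through $x_2$ is the \emph{unique} shortest path so it contributes exactly $1$ and not a fraction — which holds because any $a$–$b$ path of length $3$ must use two intermediate nodes, and the only length-$2$ "bridges" between the $A$-side and $y_2$ are $x_2$ (edge $a$–$x_2$, edge $x_2$–$y_2$) versus going $a - c_i - \cdots$ which can't reach $y_2$ in one more step, so $x_2$ is forced. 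A secondary subtlety is making sure no pair \emph{entirely} on the $A$-side gains shortest paths through $x_2$ in $G_2$ that it didn't have in $G_1$ — but since $G_1$ and $G_2$ agree completely on the $A$-side subgraph and $B$-nodes only attach to $y_2$, and we've already ruled out detours through $B$, these contributions are literally identical. Once these local distance computations are nailed down, the conclusion $BC(x_2) - BC(x_1) = \#\{\text{orthogonal pairs}\}$ follows by summing, and the lemma is immediate.

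I would then close by remarking that this reduction as stated produces sparse graphs; the bounded-degree version is obtained by the tree-and-path gadget substitution used for the diameter construction, choosing the path length $p$ large enough that the relevant shortest paths through (the gadget replacing) $x$ still behave as above, and tracking how the substitution scales $\sigma_{s,t}$ uniformly so that the difference of betweenness centralities remains proportional to the number of orthogonal pairs.
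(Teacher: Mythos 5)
Your plan hinges on the claim that adding the $B$-nodes ``does not shorten or reroute any shortest path between two non-$B$ nodes,'' and that claim is false: you checked pairs of $A$-nodes and pairs $(a,y)$, but not the pairs involving $C$-nodes and $y$. In $G_1$ the only neighbour of $y_1$ is $x_1$, so for every coordinate $i$ touched by some $a\in A$, every shortest $c_i$--$y_1$ path has length $3$ and passes through $x_1$, contributing $1$ to $BC(x_1)$; in $G_2$, as soon as some $b$ has $b[i]=1$ we get $d(c_i,y_2)=2$ via $b$, bypassing $x_2$, so this contribution vanishes. Likewise pairs $(a,c_i)$ with $a[i]=0$ acquire extra length-$3$ shortest paths through $B$-vectors, diluting the fraction of shortest paths through $x_2$, and pairs $(c_i,c_j)$ can be shortcut by a common $b$. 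Hence the common terms are not unchanged, the clean identity $BC(x_2)-BC(x_1)=\#\{\text{orthogonal pairs}\}$ does not hold, and the negative terms can dominate the gain. Concretely, take $A=\{e_1,\mathbf{1},\dots,\mathbf{1}\}$ and $B=\{e_2,\mathbf{1},\dots,\mathbf{1}\}$ (all remaining vectors all-ones, $d=\omega(\log n)$): there is exactly one orthogonal pair, contributing $+1$ through $x_2$, while the $(c_i,y)$ pairs alone cost on the order of $d$ (already for $n=2$, $d=3$ one computes $BC(x_1)=6.5$ versus $BC(x_2)\approx 4.17$). So your derivation of the ``if'' direction collapses; only the ``only if'' direction (no orthogonal pair implies the $B$-involving contributions through $x_2$ are zero, and the common contributions can only decrease) survives as you argue it.

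You should also be aware that the step at which your argument fails is exactly the clause the paper's own proof passes over (``these are the only paths not counted in $BC(x_1)$''), so the difficulty is real and cannot be repaired merely by more careful bookkeeping of the $B$-involving pairs: any complete argument must confront the pairs $(c_i,y)$, $(a,c_i)$ and $(c_i,c_j)$ whose shortest-path structure changes once $B$ is inserted, rather than assume they cancel between $G_1$ and $G_2$. A smaller point: your remark that $(b,c_i)$ pairs contribute $0$ because $d(b,c_i)=1$ only covers the case $b[i]=1$; when $b[i]=0$ the distance is at least $3$ and dismissing a contribution through $x_2$ needs its own (short) argument. Finally, your closing sketch for the bounded-degree case (``the substitution scales $\sigma_{s,t}$ uniformly'') is not how the paper proceeds there either; its constant-degree construction changes the comparison to an explicit additive threshold precisely because the two graphs' centralities do not differ only by the orthogonal-pair count.
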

\begin{proof}
    If $BC(x_2) > BC(x_1)$ it must be the case that some shortest path $P$ with
    a node $b \in B$ as an endpoint goes through $x_2$, since these are the
    only paths not counted in $BC(x_1)$. The other endpoint of $P$ cannot be
    $y$ or any node $c \in C$ as these are at distance 1 from $b$. $P$ must
    therefore have an endpoint in $a\in A$ and since $P$ passes through $x$ it
    must be of length 3. It now follows that $a$ and $b$ form an orthogonal
    vector pair. If no such pair $a$ and $b$ exists, all such paths $P$ have
    length $2$ and do not pass through $x_2$.
\end{proof}

We are now ready to proove the full version of Theorem~\ref{thm:bc}. As for
general unweighted graphs, our reduction for betweenness centrality in constant
degree graphs, starts with the $OV$-graph, and modifies it as follows:
\begin{itemize}
    \item As in the $OV_{dia}$-graph, add vector trees at each node $v \in A
        \cup B$ and tree-pairs $c_A$, $c_B$ for each $c \in C$. Connect these
        exactly as in the $OV_{dia}$-graph.
    \item Insert a node $x$ and make it the root of a balanced binary tree with
        $n$ leaves. Connect each leaf of this \emph{$x$-tree} to a root node
        $a_r$ in a vector-tree $a \in A$ with a path of length $p$.
    \item Insert a node $y$ and a \emph{$y$-tree} connected to nodes in $B$
        exactly as the $x$-tree is with $A$.
    \item Insert a path of length $p$ between $x$ and $y$.
    \item Denote this graph by $G_1$ and create a copy $G_2$. Denote node $x \in
        G_1$ by $x_1$ and $x \in G_2$ by $x_2$.
    \item In $G_2$ create a node $b'$ for each $b \in B$ and add the edge
        $(b_r,b')$ for each such $b'$.
\end{itemize}
This construction is illustrated in Figure~\ref{fig:betweenness}.
\begin{figure}[htbp]
    \centering
    \includegraphics[width=0.8\textwidth]{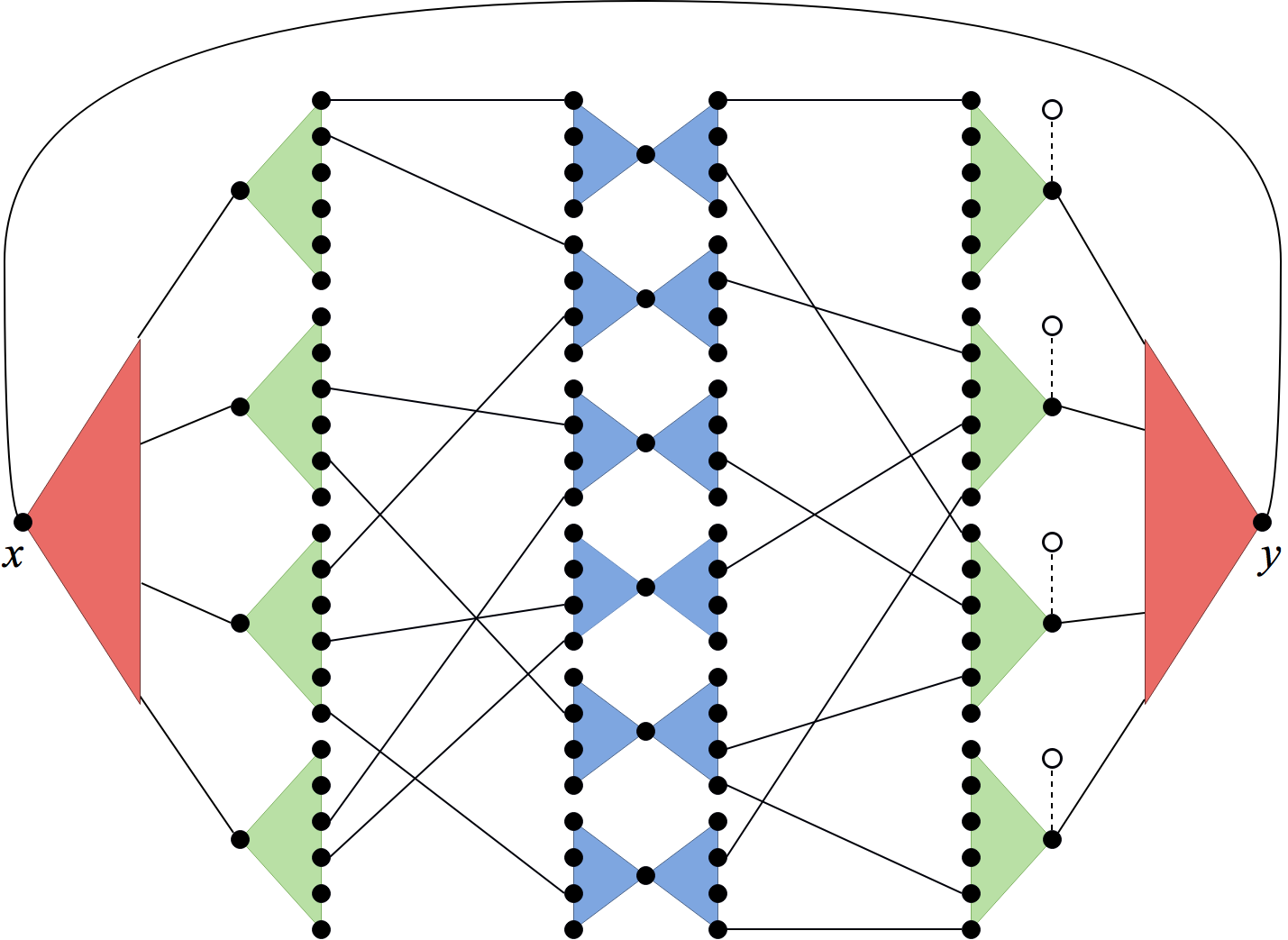}
    \caption{The graph for betweenness centrality in constant degree graphs. The white nodes along with dotted edges are the $b'$ nodes, only in $G_2$. All other lines are paths of length $p$.}
    \label{fig:betweenness}
\end{figure}

We first examine some distances in $G_1$ which will be useful.

\begin{observation}
$d(a_r, b_r) = 2p + 2 \lg n + 2 \lg d$ when $a$ and $b$ are not orthogonal
\end{observation}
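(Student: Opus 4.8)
The plan is to trace the shortest path from $a_r$ to $b_r$ in $G_1$ when $a$ and $b$ are not orthogonal, exhibit a walk of the claimed length, and then argue no shorter walk exists. Since $a$ and $b$ are not orthogonal, there is some coordinate $i$ with $a[i] = b[i] = 1$, hence the $OV$-graph contains both edges $(a, c_i)$ and $(b, c_i)$; in the constructed graph these become length-$p$ paths between a leaf of the vector tree rooted at $a_r$ and a leaf of the $c_{i,A}$-tree, and similarly for $b_r$ and the $c_{i,B}$-tree. Walking from $a_r$ down through its vector tree to the appropriate leaf costs at most $\lg d$, then along the length-$p$ connector path to a leaf of the $c_{i,A}$-tree, then up that tree to $c_i$ costs at most $\lg n$; symmetrically from $c_i$ down the $c_{i,B}$-tree, along another length-$p$ path, and up $b$'s vector tree to $b_r$ costs at most $\lg n + p + \lg d$. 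Summing gives an upper bound of $2p + 2\lg n + 2\lg d$, matching the statement.

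For the lower bound (if the observation is meant as an exact equality, or at least $\geq 2p$ suffices for the later argument), I would argue that any $a_r$–$b_r$ path must leave the $A$-side and reach the $B$-side, and the only connections between vector trees and $C$-trees are the length-$p$ connector paths; moreover the $x$-tree/$y$-tree route from $a_r$ through $x$, the $xy$-path, and $y$ to $b_r$ has length at least $p + \lg n + p + p + \lg n + p = 4p + 2\lg n$, which is longer. So the shortest route goes $a_r \to$ (leaf) $\to$ connector $\to$ ($c_A$-tree) $\to c_i \to$ ($c_B$-tree) $\to$ connector $\to$ (leaf) $\to b_r$ for some common coordinate $i$, and this forces traversing at least two full connector paths (cost $2p$) plus the tree-navigation overhead, with the binary trees being balanced so that the tree distances are exactly the stated logarithmic terms when paths are assigned to leaves in a balanced way. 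I would invoke Claims~\ref{claim:vectortrees} and the balanced-tree structure already set up for the $OV_{dia}$-graph to pin down the $2\lg n + 2\lg d$ term precisely.

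The main obstacle is bookkeeping the exact logarithmic additive terms rather than the dominant $2p$: one must check that the assignment of connector-path endpoints to tree leaves (done "in such a way that each leaf has degree at most 2", as in the $OV_{dia}$-graph construction) does not force a path to travel through a leaf far from the one incident to the relevant connector, and that the two $\lg n$ contributions come from the $c_A$- and $c_B$-trees (which have $n$ leaves) while the two $\lg d$ contributions come from the vector trees (which have $d$ leaves). Since $p = \omega(\log n)$ will be chosen large, these terms are lower-order and the observation is really only needed up to $2p + O(\log n)$; I would therefore keep the argument short, noting the matching walk and the fact that every alternative detour (via $x$, via a non-common coordinate, via longer tree paths) is strictly longer, and defer any fully tight constant-chasing since it is routine given the balanced binary tree depths.
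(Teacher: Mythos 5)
Your argument is correct and is essentially the intended justification (the paper states this observation without proof): it is the direct analogue of the proof of Claim~\ref{claim:notorthogonal}, routing through a shared coordinate $c_i$ via the vector trees and the $c_A$-/$c_B$-trees, and checking that all alternative routes are longer. One small slip: the detour via the $x$-tree, the length-$p$ $x$--$y$ path and the $y$-tree has length $3p + 2\lg n$ (three length-$p$ segments, exactly as the paper's next observation records for the orthogonal case), not $4p + 2\lg n$; this does not affect your conclusion, since $3p + 2\lg n > 2p + 2\lg n + 2\lg d$ for the chosen $p = \omega(\log n) \gg \lg d$.
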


\begin{observation}
$d(a_r, b_r) = 3p + 2 \lg n$ when $a$ and $b$ are orthogonal.
\end{observation}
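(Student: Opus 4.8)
The plan is to establish matching upper and lower bounds, both equal to $3p + 2\lg n$, for the distance $d(a_r, b_r)$ in $G_1$ when $a$ and $b$ are orthogonal. Recall that orthogonality of $a$ and $b$ means no coordinate $i$ satisfies $a[i] = b[i] = 1$, so (unlike in the preceding observation) there is no ``cheap'' two-hop route from $a$'s vector tree to $b$'s vector tree through a single node $c_i$.

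For the upper bound I exhibit an explicit $a_r$--$b_r$ path: from $a_r$ walk along its length-$p$ connector to its private leaf of the $x$-tree, climb that tree to $x$ (distance $\lg n$), cross the length-$p$ path from $x$ to $y$, descend the $y$-tree from $y$ to the leaf attached to $b_r$ (another $\lg n$), and traverse the length-$p$ connector to $b_r$. This path has length $3p + 2\lg n$, so $d(a_r, b_r) \le 3p + 2\lg n$.

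For the lower bound I use a coarsening argument. Contract each gadget -- every vector tree, every pair of $c_A/c_B$-trees sharing a root $c_i$, the $x$-tree, and the $y$-tree -- to a single super-node, and contract each length-$p$ connector path to a single edge; write $\hat a$, $\hat b$, $\hat c_i$, $\hat X$, $\hat Y$ for the resulting super-nodes. Any shortest $a_r$--$b_r$ path in $G_1$ fully traverses a set of connectors whose images form a walk from $\hat a$ to $\hat b$ in the coarsened graph, and traversing $k$ distinct connectors costs at least $kp$ in $G_1$; since $p = \omega(\log n)$, any route using four or more connectors already has length more than $3p + 2\lg n$. It therefore suffices to check that $\hat a$ and $\hat b$ are at distance at least $3$ in the coarsened graph. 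They are non-adjacent, since no connector joins two vector trees directly; and a common neighbour of $\hat a$ and $\hat b$ would have to be some $\hat c_i$ with $a[i] = b[i] = 1$ -- excluded by orthogonality -- because the only other neighbours of $\hat a$ and $\hat b$ are the distinct super-nodes $\hat X$ and $\hat Y$. Thus every $a_r$--$b_r$ path uses at least three connectors, and the unique length-$3$ walk from $\hat a$ to $\hat b$ is $\hat a \to \hat X \to \hat Y \to \hat b$; realising it forces climbing the $x$- and $y$-trees to their roots, contributing the two $\lg n$ terms, and no $\lg d$ term arises because the $\hat a$--$\hat X$ connector attaches directly to $a_r$ and bypasses $a$'s own vector tree. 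Hence $d(a_r, b_r) \ge 3p + 2\lg n$.

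I expect the main obstacle to be making the coarsening argument airtight without a lengthy case analysis -- ruling out detours that enter a connector and back out, or that zig-zag $\hat a \to \hat c_i \to \hat b' \to \hat c_j \to \cdots$ through intermediate vector trees. The cleanest way is to observe that a shortest path never partially traverses a connector (backtracking is strictly wasteful), and that every walk from $\hat a$ to $\hat b$ other than $\hat a \to \hat X \to \hat Y \to \hat b$ has coarsened length at least $4$ and hence costs at least $4p > 3p + 2\lg n$.
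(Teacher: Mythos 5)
Your proof is correct. The paper states this as an unproved observation, with the intended justification being exactly your upper-bound path $a_r \to x\text{-tree leaf} \to x \to y \to y\text{-tree leaf} \to b_r$ together with the fact that orthogonality blocks the $2p+2\lg n+2\lg d$ route through a coordinate node $c_i$; your contraction argument supplies the lower-bound rigor the paper leaves implicit. In particular, your observations that a shortest path never partially traverses a length-$p$ connector, that any coarsened walk of length at least $4$ costs more than $3p+2\lg n$ since $p=\omega(\log n)$, and that the unique length-$3$ coarsened walk forces climbing both the $x$- and $y$-trees to their roots (with no $\lg d$ term because the connectors attach at $a_r$ and $b_r$ directly) together pin the distance to exactly $3p+2\lg n$, matching the claim.
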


Using these observations we can prove the following lemma.
\begin{lemma}
    $BC(x_2) > BC(x_1) + n^2 \cdot (\frac{1}{2}p + \lg d) +n\cdot (2n-2)$ if
    and only if there is an orthogonal pair $a \in A$, $b \in B$.
\end{lemma}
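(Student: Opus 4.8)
The plan is to compute $BC(x_2) - BC(x_1)$ exactly by exploiting that $G_2$ differs from $G_1$ only in the pendant vertices $b'$, each joined by a single edge to $b_r$. A degree-one vertex lies on no shortest path between two other vertices, so shortest paths among the vertices of $G_1$ are unchanged in $G_2$, and the only pairs contributing to the difference are those with at least one endpoint some $b'$. Moreover every shortest path out of $b'$ is the edge $b'b_r$ followed by a shortest $b_r$-path, so $\sigma_{b',t}=\sigma_{b_r,t}$ and $\sigma_{b',t}(x_2)=\sigma_{b_r,t}(x_2)$ for $t\notin\{b',b_r,x_2\}$; and for two distinct pendants $b',\tilde b'$ the shortest $b'$–$\tilde b'$ path runs through $y$ (since $d(b_r,\tilde b_r)=2p+2\lg n$ through the $y$-tree beats the $2p+2\lg n+2\lg d$ detour through $C$), hence misses $x_2$. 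Therefore
\[
BC(x_2)-BC(x_1)\;=\;\sum_{b\in B}\;\sum_{t\in V(G_1)\setminus\{x_2\}}\frac{\sigma_{b_r,t}(x_2)}{\sigma_{b_r,t}},
\]
and what remains is to understand, for fixed $b$, which vertices $t$ have (some or all) shortest $b_r$–$t$ paths through $x_2$.

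The second step is this routing analysis. I would first record $d(b_r,x_2)=2p+\lg n$ (through $y$ and the $x$–$y$ path) and verify that no shortest $b_r$-path to a vertex on the $B$-side, to the interior of the $x$–$y$ path, or to any $c$-tree passes through $x_2$, since all of these are reached at least as cheaply by staying on the $B$-side or going through $C$. The vertices genuinely behind $x_2$ are then: (i) the $x$-tree, all $2n-2$ of whose non-root vertices are reached from $b_r$ only through $x_2$, contributing $2n-2$ per $b$ and $n(2n-2)$ in total; and (ii) the vertices $u_k$ at distance $k$ from the $x$-tree leaf $\ell_a$ along the length-$p$ path joining $\ell_a$ to $a_r$, where the competition is between the route through $x_2$ (length $2p+2\lg n+k$) and the route through $a_r$ (length $d(b_r,a_r)+(p-k)$). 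For a non-orthogonal pair, $d(b_r,a_r)=2p+2\lg n+2\lg d$ gives $2p+2\lg n+k<3p+2\lg n+2\lg d-k\iff k<\tfrac12 p+\lg d$, so only the first $\tfrac12 p+\lg d$ vertices of each such path are reached solely through $x_2$; choosing $p$ (its parity) so this count is exactly $\tfrac12 p+\lg d$ makes each of the $n$ paths contribute exactly that. For an orthogonal pair, $d(b_r,a_r)=3p+2\lg n$ makes the route through $x_2$ strictly shorter along the whole path, so all $p-1$ interior vertices, the vertex $a_r$, all of $a$'s vector tree, and a short prefix of the paths leaving it toward the $c_A$-trees are reached only through $x_2$.

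Assembling these gives
\[
BC(x_2)-BC(x_1)\;=\;n(2n-2)+\sum_{a\in A,\,b\in B}g(a,b),
\]
where $g(a,b)=\tfrac12 p+\lg d$ if $a,b$ are not orthogonal and $g(a,b)=\tfrac12 p+\lg d+\Delta(a,b)$ with $\Delta(a,b)\ge (p-1)+1+(2d-2)-(\tfrac12 p+\lg d)>0$ if $a,b$ are orthogonal. Hence $BC(x_2)-BC(x_1)=n(2n-2)+n^2(\tfrac12 p+\lg d)+\sum_{a\perp b}\Delta(a,b)$, and since every $\Delta(a,b)$ is strictly positive, the sum is positive exactly when at least one orthogonal pair exists — which is the claimed equivalence. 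As the construction has $\Ot(n)$ vertices and edges, is built in $\Ot(n)$ time, and has bounded degree (the handful of degree-$4$ vertices, $b_r$ and $c$, removable by the node split used in the earlier reductions), Theorem~\ref{thm:bc} then follows from Conjecture~\ref{conj:ov}.

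The main obstacle is the exact bookkeeping in the second step: the structural identity (difference $=$ sum over the $b'$) is immediate, but pinning down the precise constant $\tfrac12 p+\lg d$ requires care about the $\lceil\lg\cdot\rceil$ rounding in the tree depths and, at the single balanced vertex $u_k$ on each path where the two routes tie, about how the counts $\sigma_{b_r,u_k}$ split — this is what forces the parity choice of $p$ and is the only genuinely fiddly part. One should also confirm that in the orthogonal case nothing beyond $a$'s vector tree and the short prefixes of its outgoing paths enters the region behind $x_2$ (the $c_A$-trees do not, being reached from $b_r$ more cheaply via another $b''$ sharing a coordinate), though here $\Delta(a,b)$ only needs to be positive, not to take a particular value.
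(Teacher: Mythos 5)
Your proposal is correct and follows essentially the same route as the paper: both arguments reduce $BC(x_2)-BC(x_1)$ to the contributions of the pendant nodes $b'$, count the $2n-2$ nodes of the $x$-tree plus a prefix of length $\tfrac12 p+\lg d$ of each $\ell_a$--$a_r$ path in the non-orthogonal case, and observe that an orthogonal pair strictly enlarges this count. Your write-up is in fact somewhat more careful than the paper's (explicitly justifying why only $b'$-pairs contribute to the difference, handling $b'$--$\tilde b'$ pairs, and flagging the tie-breaking/parity issue that the paper dismisses with a ``w.l.o.g.'').
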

\begin{proof}
    We will examine the contribution to $BC(x_2)$ of each inserted node $b'$.

    Consider a vector $b \in B$ and a vector $a \in A$ that are not orthogonal.
    The shortest path from $b'$ to $a_r$ has length  $2p + 2 \lg n + 2
    \lg d$ and does not go through $x_2$. However on the path between $a_r$ and
    $x_2$ there are a number of nodes $v$ for which the shortest path between
    $b'$ and $v$ goes through $x_2$. Since $d(b', x_2) = 2p + \lg n$ and
    $d(x_2, a_r) = p + \lg n$ we get that there are $\frac{p}{2} + \lg n + \lg
    d$ such nodes $v$.

    If a vector $b \in B$ is not orthogonal to any $a \in A$ the node $b'$ will
    then contribute $\frac{p}{2} + \lg n + \lg d$ to $BC(x_2)$ for each $a$ for
    a total of $n (\frac{p}{2} + \lg n + \lg d)$ for each such vector $b$. Thus
    if there is no orthogonal pair the total contribution to $BC(x_2)$ from
    nodes $b'$ will be $n^2(\frac{p}{2} + \lg n + \lg d)$. However, by doing
    this we have overcounted the contribution of $b'$ and each node of the
    $x$-tree. Thus the actual contribution becomes $n^2\cdot(\frac{p}{2} + \lg
    d) + n\cdot (2n-2)$ as the tree has $2n-2$ nodes if we exclude the root
    $x_2$.

    If a vector $b \in B$ is orthogonal to an $a \in A$ the shortest path from
    $b'$ to $a_r$ will go through $x_2$ and likewise for all nodes $v$ on the
    path from $x_2$ to $a_r$ for a total contribution of at least $p + \lg n$
    to $BC(x_2)$

    Observe that w.l.o.g. all the shortest paths considered above and thus each
    contribute $1$ to the BC of $x_2$.
\end{proof}
Theorem~\ref{thm:bc} now follows directly from the above lemma using the same
approach as in Section~\ref{sec:diameter}.

\bibliographystyle{plain}
\bibliography{cond_lbs}

\appendix
\section{Omitted proofs}
\subsection{proofs of claims about the $OV_{dia}$ graph} \label{appendix:diaproofs}
Proof of claim \ref{claim:notorthogonal}
\begin{proof}
Any path from $a_p$ to $b_p$ must go through both of the length $p$ paths starting at the two nodes. Since $a$ and $b$ are not orthogonal, there exists a node $c$ such that there is an path of length $p$ from the vector tree at $a$ to $c_A$ and likewise from $b$'s vector tree to $c_B$, taking this path adds the height of the vector trees, the height of $c_A$ and $c_B$ and $2p$ to the total path, giving $d(a_p, b_p) \leq 4p + 2 \lg n + 2 \lg d$
\end{proof}

Proof of claim \ref{claim:orthogonal}
\begin{proof}
The distance between $a_p$ and $b_p$ must be greater when the two correspongding vectors are orthogonal, as they do not have direct paths to any node $c$. The shortest path between the two must then go through some other vector trees, w.l.o.g. assume that $a'$ and $b'$ are \emph{not} orthogonal, and $a'$ is the closest such vector tree to $a_r$, either in the shortcut tree or through some $c_A$tree. The distance between $a_ri$ and $a'_r$ through the shortcut tree is at least $2p$, and thus adds $2p$ to the distance from claim \ref{claim:notorthogonal}. To see that this is the shortest distance, note that this is the shortest possible path using the shortcut trees to get from $a$ to $a'$, and any path going through a clause would use $> 2p$ edges to get to $a'$.
\end{proof}

Proof of claim \ref{claim:vectortrees}
\begin{proof}
For any two nodes $u$ in the vector tree of a vector $a$ and $v$ in the vector tree of $a'$ the path through the $a_r$, $a'_r$ and the shortcut tree is at most $2p + 2 \lg d + 2 \lg n$.
\end{proof}

Proof of claim \ref{claim:cnode}
\begin{proof}
We consider two cases: either a node $a_p$ is the furthest from $u$ (symmetric for $b_p$), or a node $v$ in some $c'_A$ or $c'_B$ is the furthest from $u$.

For the first case any such node can be reached by traversing trees $c_A$ and $c_B$, a path of length $p$ to a vector tree rooted at $a'_r$ (possibly $a = a'$ in which case the path is shorter.), a "shortcut" of $2p + 2 \lg n$ edges through the shortcut tree to $a_r$, and $p$ edges from $a_r$ to $a_p$. For a total of $4p + 4 \lg n + \lg d$

For the second case consider the following path from $u$ to $v$, let $a$ be a vector for which there is a length $p$ path from $a$'s vector tree to $c_A$, similarly let $a'$ be such a vector for $c'_A$, the path from $u$, to $v$ through $a_r$ the shortcut tree and $a'_r$ must then have length at most $4p + 6 \lg n + 2 \lg d$ since $\lg n$ is both the height of the shortcut tree and the trees $c_A$ and $c_B$.
\end{proof}

\end{document}